\theoremstyle{plain}
\newtheorem{observation}{Observation}
\newtheorem{lem}[observation]{Lemma}
\newenvironment{proof-sketch}{%
\proof}{\endproof}
\newcommand{\old}[1]{{}}
\newcommand{\mscA}{\textsc{a-msc}\xspace}
\newcommand{\mscG}{\textsc{msc}\xspace}
\crefname{figure}{Figure}{Figures}
\crefname{theorem}{Theorem}{Theorems}
\crefname{lemma}{Lemma}{Lemmas}
\crefname{lem}{Lemma}{Lemmas}
\crefname{observation}{Observation}{Observations}
\crefname{corollary}{Corollary}{Corollaries}
\crefname{section}{Section}{Sections}
\title{Minimum Scan Cover \newline with Angular Transition Costs}
\titlerunning{Minimum Scan Cover}
\author{S\'{a}ndor P. Fekete}{Department of Computer Science, TU Braunschweig, Germany}{s.fekete@tu-bs.de}{https://orcid.org/0000-0002-9062-4241}{}
\author{Linda Kleist}{Department of Computer Science, TU Braunschweig, Germany}{l.kleist@tu-bs.de}{https://orcid.org/0000-0002-3786-916X}{}
\author{Dominik Krupke}{Department of Computer Science, TU Braunschweig, Germany}{d.krupke@tu-bs.de}{https://orcid.org/0000-0003-1573-3496}{}
\authorrunning{S. P. Fekete and L. Kleist and D. Krupke}
\keywords{Graph scanning, graph coloring, angular metric, complexity, approximation, scheduling}
\begin{document}

\maketitle

\begin{abstract}
We provide a comprehensive study of a natural geometric optimization problem motivated
by questions in the context of satellite communication and astrophysics. In
the problem {\sc Minimum Scan Cover with Angular Costs} (\mscG), we are given a graph $G$
that is embedded in Euclidean space. The edges of $G$ need to be \emph{scanned},
i.e., probed from both of their vertices. In order to scan their edge, two vertices need to face each other; changing the heading of a vertex
takes some time proportional to the corresponding turn angle. Our goal
is to minimize the time until all scans are completed, i.e., to compute a
schedule of minimum makespan.

We show that \mscG is closely related to both graph coloring and the
minimum (directed and undirected) cut cover problem; in particular,
we show that the minimum scan time for instances in 1D and 2D lies 
in $\Theta(\log \chi (G))$, while for 3D the minimum scan time is not 
upper bounded by $\chi (G)$. We use this relationship
to prove that the existence of a constant-factor approximation implies
$P=NP$, even for one-dimensional instances. In 2D, we show that it 
is NP-hard to approximate a minimum scan cover within less than a 
factor of $\nicefrac{3}{2}$, even for bipartite graphs; conversely,
we present a $\nicefrac{9}{2}$-approximation algorithm for this scenario.
Generally, we give an $O(c)$-approximation for $k$-colored graphs with $k\leq \chi(G)^c$.
For general metric cost functions, we provide approximation algorithms whose performance
guarantee depend on the arboricity of the graph.

\end{abstract}

\newpage
\section{Introduction}
Many problems of geometric optimization arise from questions of communication,
where different locations need to be connected.
For physical networks, the cost of a connection corresponds to
the geometric distance between the involved vertices, e.g., the length
of an electro-optic link. Often wireless transmissions may be used
instead; however, for ultra-long distances such as in space, this requires
focused transmission, e.g., by communication partners
facing each other with directional, paraboloid antennas or laser beams.
This makes it
impossible to exchange information with multiple partners at once; moreover,
a change of communication partner requires a change of heading, which is
costly in the context of space missions with limited resources,
making it worthwhile to invest in a good schedule.

With the advent of satellite swarms of ever-growing size, problems of 
this type are of increasingly practical importance for 
ensuring communication between spacecraft. They also come 
into play when astro- and geophysical measurements are to be performed,
in which groups of spacecraft can determine physical quantities not just
at their current locations, but also along their common line of sight.

\begin{figure}[htb]
        \centering
        \includegraphics[width=.5\textwidth]{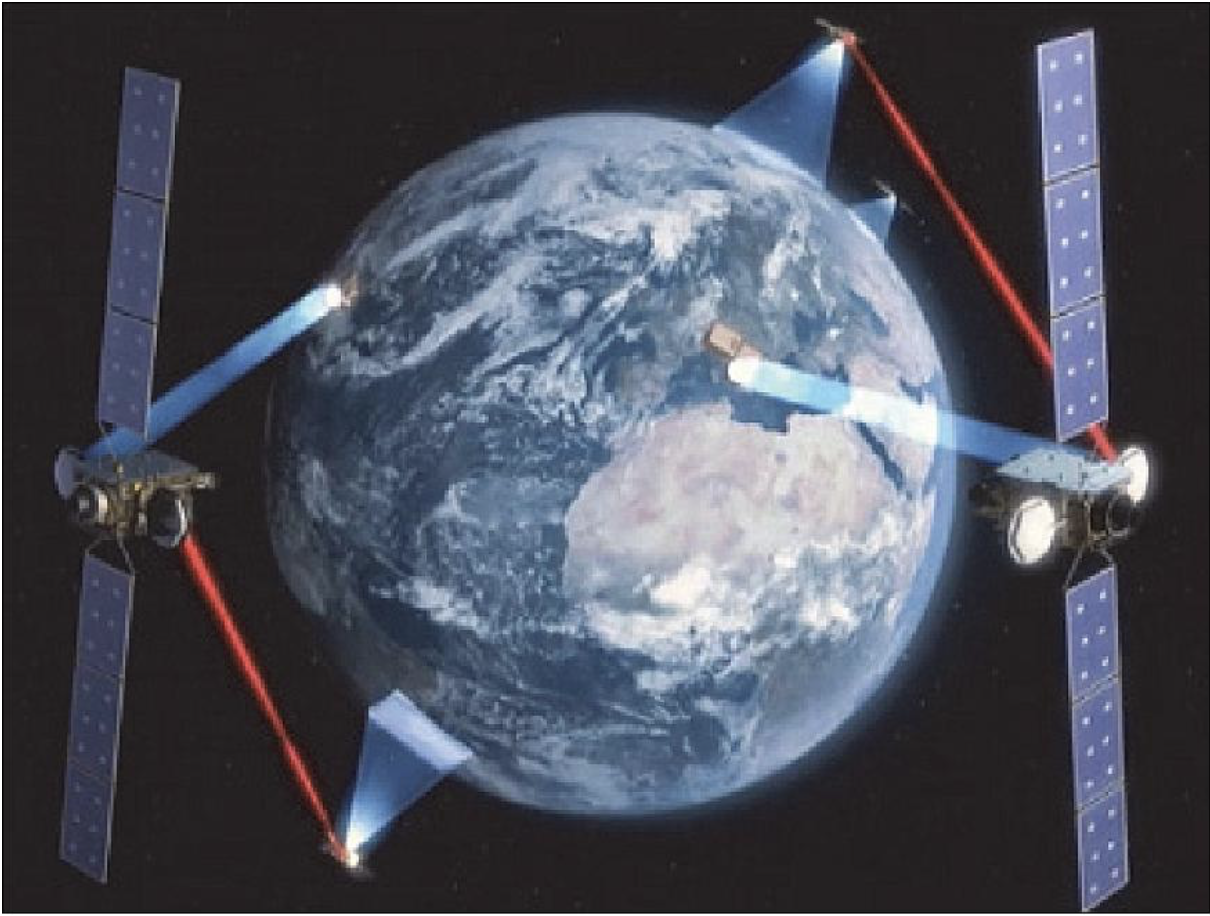}
        \caption{Artist's rendition of the European Data Relay Satellite constellation architecture. Note the intersatellite links shown in red. (Image credit: ESA)}
        \label{fig:esa}
\end{figure}

We consider an optimization problem arising from this
context: How can we schedule a given set of intersatellite
communications, such that the overall timetable is as efficient as possible?
In particular, we study
the question of a {\sc Minimum Scan Cover with Angular Costs} (\mscG), in which
we need to establish a collection of connections between a given set of
locations, described by a graph $G=(V,E)$ that is embedded in space.
For any connection (or \emph{scan}) of an edge,
the two involved vertices need to face each other; changing the heading of a vertex
to cover a different connection takes an amount of time proportional to the 
corresponding turn angle. Our goal is to minimize the time until all 
tasks are completed, i.e., compute a geometric schedule of minimum makespan.

In this paper, we provide a comprehensive study of this problem.
We show that \mscG is closely related to both graph coloring and the
minimum (directed and undirected) cut cover problem. We also provide a
number of hardness results and approximation algorithms for a variety
of geometric scenarios; see Section~\ref{overview} for an overview.

\subsection{Problem Definition: Minimum Scan Cover}

In the \emph{abstract version} of {\sc Minimum Scan Cover}, denoted by \mscA, 
we are given a simple graph $G=(V,E)$ and a
metric cost function $\alpha:\{( e, e')\in E\times E\mid
e\cap e'\neq \emptyset\}\to\mathbb R^+$ that describes the cost for switching between
two incident edges $uv, vw$ for the common vertex $v$.
A \emph{scan cover} is
an assignment $S:E\to \mathbb{R}^+$, such that for every vertex~$v$ and every pair of
incident edges $uv$ and $vw$ it holds that 
\[|S(uv)-S(vw)|\geq \alpha(uv, vw).\]
This condition provides sufficient time for the vertices to 
face each other.
We seek a scan cover that minimizes the \emph{scan
time} $\max_{e\in E}S(e)$. Note that \mscA generalizes the Path-TSP (see \cref{obs:starsTSP}), 
so the problem becomes intractable if the cost function $\alpha$ is not metric.

Given the practical motivation, our main focus is the
\textsc{(geometric) Minimum Scan Cover Problem} (\mscG), for which every vertex $v$ corresponds to a point
in $\mathbb R^d$, $d\in\{1,2,3\}$, the turn cost $\alpha(uv,vw)$ of $v$
from $uv$ and $vw$ is the (smaller) angle at $v$ between 
the segments $uv$ and $vw$. 
\cref{fig:introExample} illustrates a minimum scan cover for a point set in the plane that can be scanned in $300^\circ$ with eleven discrete time steps.

\pgfkeys{/pgf/number  format/.cd,fixed,precision=0}
\begin{figure}[htb]
\vspace{6pt}
\centering
\includegraphics[page=1]{./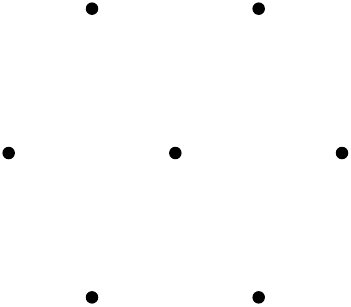}\hfil
\includegraphics[page=12]{./figures/IntroExample6gon.pdf}
\vspace{12pt}

\noindent
\foreach \x in {2,3,4,5,6}{
	\makebox(20,60)[r]{\pgfmathparse{(\x-2)*30}\pgfmathprintnumber{\pgfmathresult}$^\circ$}
\includegraphics[page=\x,scale=.4]{./figures/IntroExample6gon.pdf}\hfill
}

\vspace{12pt}
\noindent
\foreach \x in {7,8}{
\makebox(20,60)[r]{\pgfmathparse{(\x-2)*30}\pgfmathprintnumber{\pgfmathresult}$^\circ$}	\includegraphics[page=\x,scale=.4]{./figures/IntroExample6gon.pdf}\hfill
}
\foreach \x in {9,10,11}{
\makebox(20,60)[r]{\pgfmathparse{(\x-1)*30}\pgfmathprintnumber{\pgfmathresult}$^\circ$}	\includegraphics[page=\x,scale=.4]{./figures/IntroExample6gon.pdf}\hfill
\vspace{3pt}
}
  \caption{\textbf{(Left)} A set of seven points in $\mathbb R^2$, for which the complete graph $K_7$ needs to be scanned.
  \textbf{(Bottom)} A sequence of edge scans; note how some edges can be scanned in parallel. 
 }
\label{fig:introExample}
\end{figure}

In fact, a scan cover is completely determined by an edge order:
For each edge sequence $e_1,\dots,e_m$, the best scan cover that scans the edges in this order can be computed by 
\[S(e_1)=0 \text{ and }S(e_i)=\max\{S(e_j)+\alpha(e_i,e_j)|j: j<i,e_i\cap e_j\neq \emptyset )\} \text{ for } i>1.\]

In some settings, we may be given an initial heading of each vertex.
However, the cost of changing from the initial heading to any other is usually negligible compared to the cost of the remaining schedule.
In fact, any $C$-approximation without initial headings yields a
$(C+1)$-approximation for the variant with initial headings, as turning from
the initial heading to any edge via a smallest angle is not more expensive
than the minimum makespan.

\subsection{Overview of Results and Organization}\label{overview}
In \cref{sec:1D}, we show that the \mscG in 1D corresponds to a minimum
directed cut cover and has a strong correlation to the chromatic number.  We
provide an improved upper bound of $\lceil \log_2 \chi (G)+\frac 1 2 \log_2
\log_2 \chi (G)+1\rceil$ (\cref{thm:boundLine,cor:directedCut}) for the minimum
directed cut cover number, which is essentially tight in general; even for directed acyclic graphs corresponding to
minimum scan covers (\cref{lem:1dtight}) this is in the
right order of magnitude. 
This implies that, unless $P=NP$, there exists no constant-factor approximation even in 1D (\cref{thm:noapx}). Nevertheless, we show that instances in which the underlying graphs are bipartite or complete graphs can be solved in polynomial time (\cref{obs:1dbipopt,obs:1dcomplopt}).

In \cref{sec:2D}, we consider the problem in 2D and show that it is NP-hard to approximate minimum scan covers of bipartite graphs better than $\nicefrac{3}{2}$ (\cref{thm:hardnessBIP}). Furthermore, we provide absolute and relative bounds. On the one hand, every bipartite graph in 2D has a scan cover of length $360^\circ$ (\cref{lemma:scan_points_splited_by_line}). On the other hand, we present a $\nicefrac{9}{2}$-approximation algorithm (\cref{thm:approxBip}). More generally, 
 we present an $O(c)$-approximation for a $k$-colored graph with $k\leq\chi(G)^c$ (\cref{cor:approxCol}). This has immediate consequences for several interesting graph classes, e.g., the scan time of graphs in 1D and 2D lies in $\Theta(\log_2 {\chi(G)})$ and there exist constant factor approximations (\cref{cor:specialCases,cor:complete2D}).

In \cref{sec:3D}, we consider \mscG in 3D and the abstract version \mscA. We show that in contrast to 2D, the length of a minimum scan cover in 3D may exceed $O(\log_2 n)$ (\cref{obs:noboundbychrom}). 
Complementary to the fact that \mscA for stars is equivalent to path-TSP and thus NP-hard, we provide a $2.5$-approximation of \mscA for trees (\cref{thm:3DTree}).
This yields an $O(A)$-approximation for every graph with arboricity $A$ (\cref{thm:amscArboricityApprox}).

\subsection{Related Work}
The use of directional antennas has introduced a number of geometric questions.
Carmi et al.~\cite{carmi2011connectivity} study
the $\alpha$-MST problem, which arises from finding orientations of directional antennas with $\alpha$-cones, such 
that the connectivity graph yields a spanning tree of minimum weight, based on bidirectional communication.
They prove that for $\alpha<\nicefrac \pi 3$, a solution may not exist, while $\alpha\geq\nicefrac \pi 3$ always suffices.
See Aschner and Katz~\cite{aschner2017bounded} for more recent hardness proofs and constant-factor 
approximations for some specific values of $\alpha$.

Many other geometric optimization problems deal with turn
cost. Arkin et al.~\cite{arkin2001optimal,arkin2005optimal} 
show hardness
of finding an optimal milling tour with turn cost, even in relatively constrained settings,
and give a $2.5$-approximation algorithm for obtaining a cycle cover,
yielding a $3.75$-approximation algorithm for tours.
The complexity of finding an optimal cycle cover in a 2-dimensional grid
graph was stated as \emph{Problem~{53}} in \emph{The Open Problems
Project}~\cite{openproblemproject} and shown
to be NP-complete in~\cite{fk-ctcct-19}, which also provides constant-factor
approximations; practical methods and results are given in~\cite{ALENEX19},
and visualized in the video~\cite{bdf+-zzmmd-17}.

Finding a fastest roundtrip for a set of points in the plane for which the
travel time depends only on the turn cost is called the \emph{Angular Metric
Traveling Salesman Problem}. Aggarwal et al.~\cite{aggarwal2000angular} prove
hardness and provide an $O(\log n)$ approximation algorithm for cycle covers
and tours that works even for distance costs and higher dimensions. 
For the abstract version on graphs in which ``turns'' correspond to weighted changes between edges, 
Fellows et al.~\cite{fellows2009abstract} show that the problem is fixed-parameter tractable in the number of turns, the treewidth, and the maximum degree.
Fekete and Woeginger~\cite{fekete1997angle} consider the problem of connecting a set of points
by a tour in which the angles of successive edges are
constrained.
\pagebreak

Our paper also draws connections to other graph optimization
problems.
In particular, for each point in time, the set of scanned edges induces a bipartite graph.
Therefore, one approach for scanning all edges of the given graph is to partition it into a small
number of bipartite graphs, each corresponding to the set of edges separated by
the \emph{cut} induced by a partition of vertices into two non-trivial sets. 
This problem is also known as the \textsc{Minimum Cut Cover Problem}:
Find the minimum number of cuts to cover all edges of a graph. Loulou~\cite{Loulou:1992:MCC:2307951.2308121} shows that for complete graphs,
an optimal solution consists of $\lceil \log_2 |V| \rceil$ cuts. Motwani and
Naor~\cite{Motwani:1994:EAC:891890} prove that, unless $P=NP$, the problem on general graphs
is not approximable within $1.5$ of the optimum, or $OPT+\varepsilon \log |V|$
for some $\varepsilon>0$ in absolute terms, due to a direct relationship with 
graph coloring. Hoshino~\cite{DBLP:journals/endm/Hoshino11} considers 
practical methods based on integer programming and heuristics for cut covers. Chuzhoy and
Khanna~\cite{Chuzhoy:2006:HCP:1132516.1132593} show that 
the directed version of 
covering a directed graph by the minimum number of directed cuts is also an NP-hard problem. 

On the application side, Korth et al.~\cite{korth2002particle} describe the use of tomography (i.e., 
determining physical phenomena by measuring aggregated effects along a ray 
between two sensors) in the context of astrophysics. Using multiple sensors 
(e.g., satellites) for performing efficient measurements is one
of the motivations for the algorithmic work in this paper.
Scheduling satellite communication has received a growing amount of attention,
corresponding to the increasing size of satellite swarms.
See Krupke et al.~\cite{krupke2019automated} for a recent overview.

In the context of scheduling,
Allahverdi et al.~\cite{allahverdi2015third,allahverdi1999review,allahverdi2008survey} provide a nice and comprehensive survey on scheduling variants with sequence-dependent setup costs.
Sotskov et al.~\cite{sotskov2001mixed} consider a scheduling variant that can directly be expressed as vertex coloring.

\section{One-Dimensional Point Sets}\label{sec:1D}

In the one-dimensional case, all vertices lie on a single line $L$.
Therefore, an instance can be described by a graph $G=(V,E)$ and a total order of the
vertices $<_L$ on $L$. We assume this line to be horizontal, so 
vertices face either left or right when scanning an edge. 
Moreover, scan times can be restricted to discrete multiples of $180^\circ$. 
This allows us to encode the headings of a vertex $v$ at these time steps by
a 0-1-vector $s(v)$, where a right heading is denoted by $0$, and a left one
by $1$; we denote by $s_i(v)$ the $i$th bit of $s(v)$. Then a scan cover with $N$ steps
of $(G,<_L)$ is an assignment $s\colon V\to \{0,1\}^N$, such that for every
edge $uv\in E$, $u <_L v$, there exists an index $i\in [N]$ with
$s_i(u)=0$ and $s_i(v)=1$. The value of such a scan cover is clearly $180^\circ (N-1)$.
For an example, consider \cref{fig:Line}. 

\begin{figure}[htb]
  \centering
  \includegraphics[page=3]{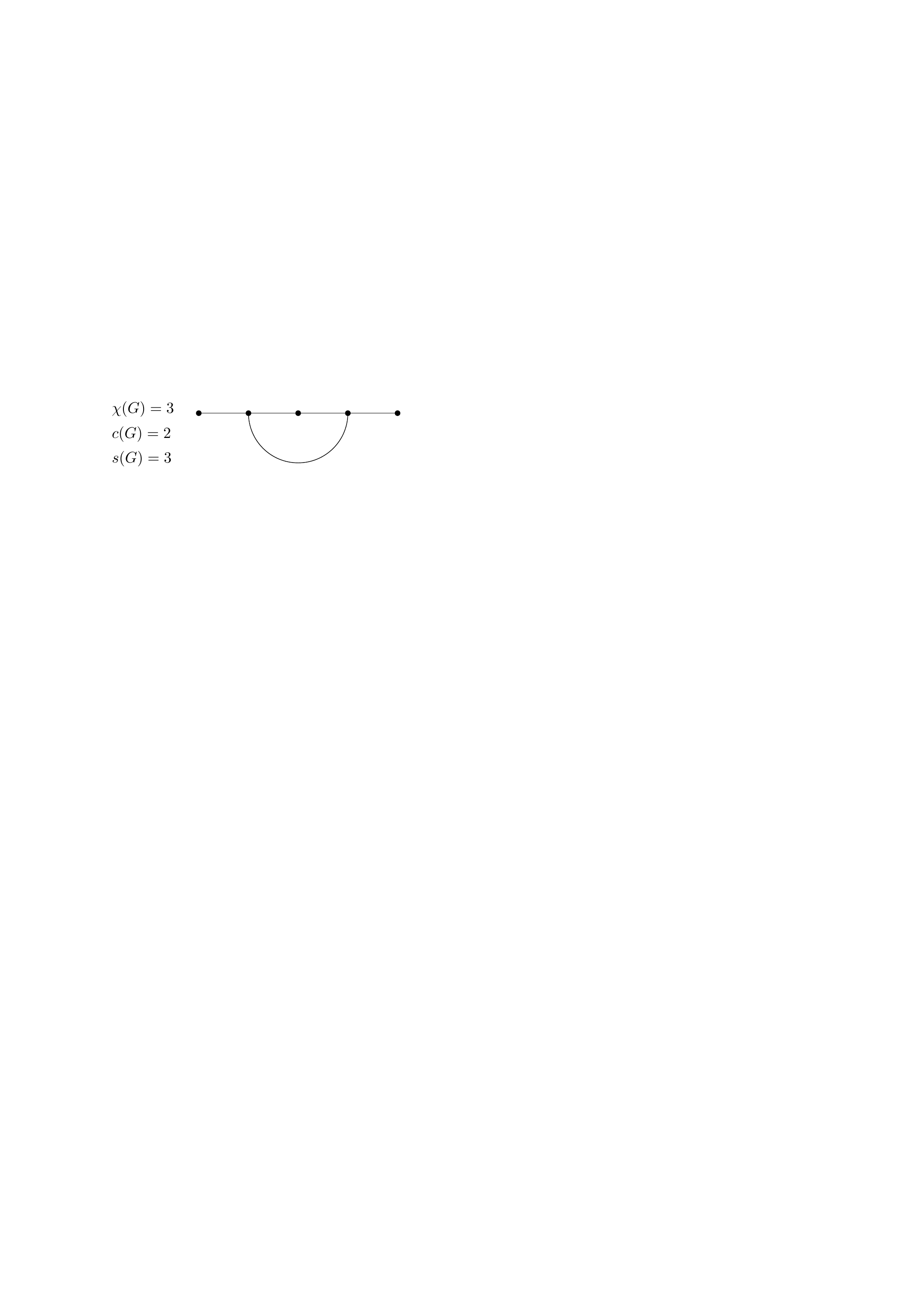}
  \caption{This instance can be scanned in three steps. However, two steps are not sufficient because the edges of a monotone path would need to  be scanned in alternating time steps; making it impossible to scan the green edge.
  }
  \label{fig:Line}
\end{figure}

\pagebreak

\subsection{Bounds Based on Chromatic Number and Cut Cover Number}

In the following, we establish a strong relationship between the length of a \mscG in 1D
and the chromatic number $\chi(G)$, which is closely linked to the cut cover number
$c(G)$ of the involved graph $G=(V,E)$, i.e., the size of a smallest
partition of the edge set into bipartite graphs. Motwani and
Naor~\cite{Motwani:1994:EAC:891890} show that 
\[c(G)=\lceil \log_2 \chi (G) \rceil.\]

Because the scanned edges in each time step form a bipartite graph,
a scan cover induces a cut cover. However, the resulting bipartite
graphs have the additional property that for each vertex all neighbors are
either smaller or larger with respect to $<_L$. Thus, not every cut cover
corresponds to a scan cover. 
However, scan covers correspond to \emph{directed}
cut covers of the directed graph, induced by orienting the edges from left to right. 
Watanabe et al.~\cite{directedCutCover} 
bound the directed cut cover number $\vec c (G)$ of a directed graph~$G$:
\[\vec c (G)\leq \lceil \log_2 \chi (G) \rceil+\lceil \log_2 \lceil \log_2 \chi (G)+1\rceil\rceil \]
We improve this bound by showing an upper bound for the size of a smallest scan
cover in terms of the chromatic number (and the cut cover number); this bound is best possible
for the directed cut cover number as we explain later.

\begin{theorem}\label{thm:boundLine}
  For every graph $G$ with $\chi(G)\geq 2$ and every ordering $<_L$ of
  the vertices, there exists a scan cover of $(G,<_L)$ with $N$ steps such
  that
  \begin{equation*}
    N \leq \lceil \log_2 \chi (G)+\frac 1 2 \log_2 \log_2 \chi (G)+1\rceil
  \end{equation*}
\end{theorem}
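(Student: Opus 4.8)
The plan is to construct a scan cover directly from an optimal proper coloring of $G$. Since $\chi(G)=k$, fix a proper coloring $c\colon V\to\{0,1,\dots,k-1\}$. For each vertex $v$, I want to build a binary heading vector $s(v)\in\{0,1\}^N$ so that every oriented edge $u\to v$ (with $u<_L v$) is covered, i.e. there is some time step $i$ with $s_i(u)=0$ and $s_i(v)=1$. The key idea is that the color classes are independent sets, so every edge joins two \emph{distinct} colors; it therefore suffices to design a family of binary ``codewords'' $w(0),\dots,w(k-1)\in\{0,1\}^N$ such that for every \emph{ordered} pair of distinct colors $(a,b)$ there is a coordinate $i$ where $w(a)$ has a $0$ and $w(b)$ has a $1$. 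Setting $s(v)=w(c(v))$ then yields a valid scan cover with $N$ steps, so the whole problem reduces to bounding the minimum length $N$ of such a ``distinguishing'' code on $k$ symbols.

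\medskip

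\noindent\textbf{Reducing to a covering-code question.} The condition ``for every ordered pair $(a,b)$ of distinct colors, some coordinate has $w(a)_i=0,\ w(b)_i=1$'' is exactly the statement that the $k$ codewords form an antichain-free covering of all ordered pairs. A clean way to meet it is to use each coordinate as a cut: coordinate $i$ partitions the colors into those with bit $0$ and those with bit $1$, and it covers precisely the ordered pairs crossing that cut in the $0\to 1$ direction. So I need a sequence of bipartitions (cuts) of the $k$ colors whose oriented crossing pairs cover all $k(k-1)$ ordered pairs; this is precisely the \emph{directed} cut cover of the complete symmetric digraph on $k$ vertices, which connects back to the bound of Watanabe et al.\ quoted just before the theorem. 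The improvement over $\lceil\log_2 k\rceil+\lceil\log_2\lceil\log_2 k+1\rceil\rceil$ must come from a more economical code.

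\medskip

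\noindent\textbf{The economical construction.} The main technical step is to produce codewords of length $N\le\lceil\log_2 k+\tfrac12\log_2\log_2 k+1\rceil$. The natural plan is a two-block construction: write each color $a\in\{0,\dots,k-1\}$ in binary using $\lceil\log_2 k\rceil$ bits as a ``base block,'' which already distinguishes any two colors in \emph{one} direction, and then append a short ``balancing block'' that fixes the pairs distinguished only in the wrong direction. Concretely, for two colors whose base blocks first differ at a high bit, one direction of the ordered pair is already covered but the reverse may not be; one standard trick is to also include the complemented base block, but that doubles the length. To beat that, I expect to group the colors by the \emph{weight} (number of ones) or by the block structure of their binary representations, so that the extra coordinates only need to separate colors sharing a common prefix pattern, and there are few enough such classes that roughly $\tfrac12\log_2\log_2 k$ additional coordinates suffice. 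The arithmetic that the base block ($\log_2 k$) plus the balancing block ($\tfrac12\log_2\log_2 k+O(1)$) fits inside the stated ceiling is where the precise constants get pinned down.

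\medskip

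\noindent\textbf{Main obstacle.} The crux — and the reason this is a genuine improvement rather than a direct application of a known code — is getting the balancing block down to $\tfrac12\log_2\log_2 k$ rather than the $\log_2\log_2 k$ that a naive ``separate all prefix-equal pairs in both directions'' argument gives. I expect the saving to come from observing that one does not need to separate \emph{all} same-prefix pairs bidirectionally: the base block already orients half of every pair, so the balancing block only has to supply the \emph{missing} orientation, and by a parity/weight argument the number of coordinates needed to guarantee a correctly-oriented separating coordinate for every residual pair grows like the \emph{square root} of the number of prefix classes, hence the factor $\tfrac12$ in the exponent of the logarithm. Verifying that this residual covering really is achievable with that many coordinates, and that all corner cases (e.g. colors with identical prefixes, the requirement $\chi(G)\ge 2$ to rule out the degenerate single-color case) are handled, will be the delicate part of the full proof.
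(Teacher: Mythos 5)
Your reduction is exactly the one the paper uses: fix a proper coloring with $C=\chi(G)$ colors, assign a binary codeword to each color class, and observe that covering every \emph{ordered} pair of distinct colors (which is what uniformity over all orderings $<_L$ demands) is precisely the condition that the codewords are pairwise \emph{incomparable}, i.e.\ form an antichain in $\{0,1\}^N$. Up to that point the proposal is sound and matches the paper.

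The gap is in the construction itself. Your two-block plan --- plain binary representation as a base block plus a balancing block of length about $\frac{1}{2}\log_2\log_2 k$ --- provably cannot work. Consider the colors $0,1,3,7,\dots,2^m-1$ (with $m=\lceil\log_2 k\rceil$): their base blocks form a chain of length $m+1$ in the componentwise order, so to make the combined codewords an antichain, the balancing blocks attached to these colors must satisfy $z_i\not\leq z_j$ whenever $i<j$ along the chain; in particular they must be pairwise distinct, which forces the balancing block to have length at least $\log_2(m+1)\approx\log_2\log_2 k$. Hence any construction of this shape only reproduces the Watanabe et al.\ bound $\log_2 k+\log_2\log_2 k+O(1)$ that the theorem is meant to improve, and the ``square root of the number of prefix classes'' saving you invoke has no justification. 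The factor $\frac{1}{2}$ comes from a different and simpler idea that your proposal never reaches: drop the binary-representation base block entirely and take as codewords \emph{distinct vectors of weight exactly $\lfloor N/2\rfloor$} (the middle layer of the hypercube). Any two distinct such vectors are automatically incomparable, since they have equal weight, so no balancing is needed at all; the only remaining work is the Stirling-type estimate $\binom{N}{\lfloor N/2\rfloor}\geq \sqrt{2/(\pi N)}\cdot 2^N\cdot e^{-1/(4N-1)}$, which shows that $N=\lceil\log_2 C+\frac{1}{2}\log_2\log_2 C+1\rceil$ suffices to accommodate $C$ colors. The $\sqrt{N}$ denominator in that estimate is exactly the source of the $\frac{1}{2}\log_2\log_2$ term, and by Sperner's theorem the middle layer is the largest possible antichain, so no cleverer two-block scheme could beat it anyway.
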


\begin{proof}
  Consider a coloring of $G$ with $C:=\chi(G)$ colors and choose an $N$
  large enough such that $C\leq \binom{N}{\lfloor N/2\rfloor}$. For
  $k:=\lfloor\frac{N}{2}\rfloor$, we consider the set of vectors
  $\{0,1\}^N_k$ of length $N$ with exactly $k$ many $1$'s. We define a scan
  cover $s:V\to\{0,1\}^N_k$, such that for all vertices of the same color, we
  assign the same vector, while vertices of different color obtain different
  vectors. Such an assignment exists, because the number of vectors, i.e.,
  $\binom{N}{\lfloor N/2\rfloor}$, is at least as large as the number of colors.

  To see that $s$ is a scan cover, consider a fixed but arbitrary edge
  $uv$ of $G$. Because the vectors $s(u)$ and $s(v)$ differ but have the same
  number of 1's, they are \emph{incomparable}, i.e., there exist $i$ and $j$ such that $s_i(u)=0$, $s_i(v)=1$ and
  $s_j(u)=1, s_j(v)=0$. Therefore, depending on the ordering of $u$ and $v$ on $L$, the edge $uv$ is either scanned in step $i$ or $j$. 

  It remains to show that defining $N:= \lceil\log_2 C+\frac 1 2 \log_2 \log_2 C+1\rceil$ satisfies $C\leq \binom{N}{\lfloor N/2\rfloor}$.
  By a variant of Stirling's formula \cite{stirling}, it holds that 

  \begin{equation*}
    e^{1/(12n+1)}\leq \frac{n!}{\sqrt{2\pi n}(n/e)^n}\leq e^{1/{(12n)}}.
  \end{equation*}
  This implies that 
  $\binom{N}{\lfloor N/2\rfloor}\geq\sqrt{\frac{2}{\pi N}}\cdot 2^N\cdot e^{\frac{-1}{4N-1}},$
  so it suffices to guarantee
  \begin{equation*}
    C\leq  \sqrt{\frac{2}{\pi N}}\cdot 2^N\cdot e^{\frac{-1}{4N-1}}
    \iff \log_2 C\leq N +\frac{1}{2} (1-\log_2 \pi -\log_2 N)-\frac{1}{4N-1}\log_2 e.
  \end{equation*}
  If $C\geq 3$, this holds for $N= \lceil\log_2 C+\frac 1 2 \log_2 \log_2
  C+1\rceil\geq 3$ ; in case of $C=2$, it holds that $N= \lceil\log_2 C+\frac 1 2
  \log_2 \log_2 C+1\rceil=2$, and thus $C\leq \binom{N}{\lfloor N/2\rfloor}$.
\end{proof}

Note that the assigned vectors in the proof of \cref{thm:boundLine} are pairwise incomparable. Therefore, such an assignment yields a directed cut cover for all edge directions and thus a general bound on the directed cut cover number.

\begin{corollary}\label{cor:directedCut}
  For every directed graph $G$, the directed cut cover number is bounded by 
  $$\vec c(G)\leq \lceil \log_2 \chi (G)+\frac 1 2 \log_2 \log_2 \chi (G)+1\rceil.$$
\end{corollary}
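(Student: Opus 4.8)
The plan is to reuse the vector assignment constructed in the proof of \cref{thm:boundLine} and reinterpret its coordinates as directed cuts. Concretely, I would set $C := \chi(G)$ and $N := \lceil \log_2 C + \frac{1}{2}\log_2\log_2 C + 1\rceil$, and (restricting to the nontrivial case $C \geq 2$, since a digraph carrying an arc has underlying chromatic number at least $2$, while an edgeless digraph is covered by the empty family) invoke the same counting argument as before: because $C \leq \binom{N}{\lfloor N/2\rfloor}$, we may color $G$ with $C$ colors and assign to each color class a distinct length-$N$ vector of Hamming weight $k := \lfloor N/2\rfloor$, obtaining a map $s\colon V \to \{0,1\}^N_k$ that is constant on color classes and injective across them.

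First I would turn each coordinate into a directed cut. For $i \in [N]$, let $A_i := \{v : s_i(v) = 0\}$ and $B_i := \{v : s_i(v) = 1\}$, and let $D_i$ be the directed cut consisting of all arcs running from $A_i$ to $B_i$. This yields a family of $N$ directed cuts, and it then remains only to verify that every arc lies in some $D_i$.

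The crux is that equal Hamming weight makes the argument orientation-blind, which is exactly what the directed setting demands. Fix any arc with tail $u$ and head $v$; since $uv$ is an edge of the underlying graph, the endpoints receive different colors, so $s(u) \neq s(v)$. As $s(u)$ and $s(v)$ have the same number of ones, they are incomparable, so there is an index $i$ with $s_i(u) = 0$ and $s_i(v) = 1$; hence $u \in A_i$, $v \in B_i$, and the arc belongs to $D_i$. The symmetric index $j$ with $s_j(u) = 1$ and $s_j(v) = 0$ is precisely what would cover the reverse arc, which explains why the same $N$ coordinates serve simultaneously for every orientation of the edge set. Thus $\{D_1,\dots,D_N\}$ is a directed cut cover, giving $\vec c(G) \leq N$.

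I do not expect a genuine obstacle here, as the entire content is front-loaded into \cref{thm:boundLine}, whose vectors were chosen of common weight exactly to be pairwise incomparable. The only things to confirm are the trivial case $\chi(G) \leq 1$ and, more conceptually, that it is incomparability — rather than the ordering $<_L$ that selected a single index in the one-dimensional scan-cover reading — which certifies coverage independently of arc direction.
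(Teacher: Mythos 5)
Your proof is correct and takes essentially the same route as the paper: the paper's own argument is precisely the observation that the equal-weight vectors constructed in \cref{thm:boundLine} are pairwise incomparable, so each coordinate yields a directed cut and incomparability certifies coverage of every arc regardless of its orientation. You merely spell out in full detail (including the trivial case $\chi(G)\leq 1$ and the explicit cuts $D_i$) what the paper compresses into a two-sentence remark following \cref{thm:boundLine}.
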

In fact, the bound in \cref{cor:directedCut} is best possible for general directed graphs, because a cut cover of the complete bidirected graph corresponds to an assignment of pairwise incomparable vectors (and Sperner's theorem asserts that the used set of vectors is maximal).

\cref{fig:Line} illustrates an example of a graph $G$ and an ordering $<_L$ showing that the bound of \cref{thm:boundLine,cor:directedCut} is also tight for some (directed acyclic) graphs with $\chi(G)=3$.
In the following, we show a general lower bound for our more special setting. 
\begin{lem}\label{lem:1dtight}
  For every $C$, there exists a graph $G$ and an ordering $<_L$ such that $\chi(G)>C$ and the number $N$ of steps in every scan cover of $(G,<_L)$ is at least
  \begin{equation*}
    N\geq \lceil\log_2 \chi(G)+\frac 1 4 \log_2 \log_2 \chi(G)\rceil.
  \end{equation*}
\end{lem}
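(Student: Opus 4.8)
The plan is to exhibit, for each $C$, an explicit pair $(G,<_L)$ whose chromatic number $m:=\chi(G)$ exceeds $C$ and for which no short scan cover exists, by turning the Sperner-type tightness behind \cref{cor:directedCut} (which uses the complete \emph{bidirected} graph) into a statement about a genuine one-dimensional instance, i.e.\ an acyclic orientation. Recall that a scan cover with $N$ steps is exactly an assignment $s\colon V\to\{0,1\}^N$ such that for every edge $uv$ with $u<_L v$ the vector $s(v)$ is not dominated by $s(u)$ (there is a coordinate with $s_i(u)=0$, $s_i(v)=1$); equivalently the supports satisfy $\operatorname{supp}(s(v))\not\subseteq\operatorname{supp}(s(u))$. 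Since two vertices sharing a vector must be non-adjacent, $s$ induces a proper colouring and hence $2^N\ge\chi(G)$. The whole point is to push this crude bound by a $\Theta(\log_2\log_2\chi)$ term, so the construction must preclude the ``cheap'' assignments that let complete graphs be scanned in $\lceil\log_2\chi\rceil$ steps.

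Concretely, I would take $G$ to be a complete multipartite–type graph on $m$ colour classes (so $\chi(G)=m>C$), each class realised by a small constant-size gadget, and fix an \emph{interleaved} order $<_L$ so that for every ordered pair of classes $i\ne j$ the induced edges force a cross-domination condition in \emph{both} directions between the vectors representing the two classes. The property I am after is that this ordering rules out the trivial \emph{weight-separated} scan cover (give the ``left'' copies low-weight and the ``right'' copies high-weight vectors), which is exactly what would otherwise collapse such an instance to $N=\lceil\log_2 m\rceil+1$; enforcing both directions makes any such separation self-contradictory and pins the class-vectors into a cross-intersecting \emph{set-pair system}.

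The heart of the proof is then a set-pair (Bollob\'as-type) inequality. Writing $A_i,B_i\subseteq[N]$ for the supports attached to class $i$, the cross conditions translate into $B_j\not\subseteq A_i$ and symmetrically $A_j\not\subseteq B_i$ for all $i\ne j$, whence a skew Bollob\'as / weighted-chain (LYM) argument bounds the number of classes by $m\le 2^{N}/N^{c}$ for a constant $c$. I would then invert this estimate with the same variant of Stirling's formula used in \cref{thm:boundLine}; tracking the constant to $c=\tfrac14$ yields $N\ge\lceil\log_2 m+\tfrac14\log_2\log_2 m\rceil=\lceil\log_2\chi(G)+\tfrac14\log_2\log_2\chi(G)\rceil$.

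The main obstacle is twofold and lives entirely in making the previous paragraph rigorous in the acyclic (one-orientation) world. First, a single edge can only \emph{forbid} a domination, never enforce one, so one cannot directly impose the ``diagonal'' disjointness that a clean Bollob\'as application wants; the construction can only secure the \emph{one-sided} cross conditions, and it is precisely this loss that degrades the attainable exponent from the $\tfrac12$ of the bidirected (Sperner) case to $\tfrac14$. Second, one must verify that the gadgets genuinely couple the several vertices of a class so that they behave like one logical vector—otherwise the adversary regains the weight-separation freedom—and then check that the resulting set-pair bound survives the Stirling inversion with the stated constant. I expect the bookkeeping of the gadget coupling, together with choosing the interleaving so that every pair of classes contributes both cross conditions, to be the delicate part; the inequality and the inversion are then routine given \cref{thm:boundLine}.
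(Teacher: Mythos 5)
Your plan breaks down exactly at the step you flag as delicate, and it cannot be repaired at constant class size. Vertices of the same colour class are pairwise non-adjacent, so \emph{no} gadget can constrain them relative to one another: the adversary may always give distinct copies of a class distinct vectors, and with constant-size classes this freedom is provably fatal. Concretely, suppose each class has $c$ copies and $<_L$ consists of $c$ interleaved blocks, each listing one copy of every class (the maximal interleaving constant-size classes admit; it does impose your ``both-direction'' cross conditions). Reserve coordinates $1,\dots,c-1$, index the classes by sets $X\subseteq\{c,\dots,N\}$, order the classes within each block by $|X|$, and give the copy of class $X$ in block $r$ the vector with support $X\cup\{1,\dots,r-1\}$. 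For two classes $X\neq Y$ with copies in blocks $r\le r'$: if $r=r'$, the later support is contained in the earlier one only if $Y\subseteq X$, impossible since $|Y|\ge|X|$ and $Y \neq X$; if $r<r'$, the later support contains coordinate $r$ while the earlier one does not. So every edge is scanned, and a complete $2^{\,N-c+1}$-partite graph is covered in $N=\log_2 m+c-1$ steps. Hence constant-size gadgets force \emph{no} $\log\log$ term whatsoever, contradicting the bound you aim for. The same example kills the proposed set-pair lemma: translating the scan conditions into $A_j\cap B_i\neq\emptyset$ for $i\neq j$ gives only the \emph{cross} conditions of a Bollob\'as system, never the diagonal disjointness $A_i\cap B_i=\emptyset$ (as you yourself note, single edges cannot enforce it), and without that condition there is no bound of the form $m\le 2^N/N^{c}$: in the construction above all relevant sets share the common element $1$, which is precisely how the adversary evades it.

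The paper escapes this trap by letting the classes \emph{grow}: it takes the complete $2^n$-partite Tur\'an graph with classes of size $n$, interleaved in $n$ blocks, so the number of blocks is $\log_2\chi(G)$, not a constant. A counting argument then substitutes for your impossible coupling. Since adjacent vertices cannot share a vector, a vector occurs in only one class; with $k=n+\ell/4-1$ steps there are only $2^{n-1}n^{1/4}$ vectors for $n2^n$ vertices, so at least half of the classes must reuse some single vector at least $n^{3/4}$ times. Each reused vector spans an interval of length at least $2^n n^{3/4}$ on $L$, an averaging/pigeonhole step yields $2^{n-1}n^{-1/4}$ such intervals through a common point, and two intersecting intervals give an alternation $a\dots b\dots a$ with \emph{genuinely equal} endpoint vectors---this, not any per-pair gadget, is what forces pairwise incomparability. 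Sperner's theorem caps the number of pairwise incomparable vectors at roughly $2^k/\sqrt{k}$, producing the contradiction; the exponent $\frac{1}{4}$ emerges from balancing the reuse threshold $n^{3/4}$ against the Sperner bound, not from a Bollob\'as-type constant as your sketch assumes. Any correct proof along your lines must therefore let both the class sizes and the number of interleaved blocks grow (at least like $\Omega(\log_2\log_2\chi)$, as the counterexample shows), and must replace the set-pair inequality by an argument that recovers equal vectors at the two ends of an alternation---the two ideas your proposal is missing.
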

\begin{proof}
  Let $\ell\geq  4$ be an integer divisible by 4 and $n:=2^{\ell}$ such that $2^n>C$.
  We consider the Turan graph~$G$ on $n2^n$ vertices partitioned into
  $2^n$ independent sets of size $n$. Because $G$ is a complete $2^n$-partite
  graph, it holds that $\chi(G)=2^n$. We place the vertices on the line, such
  that for a fixed $\{1,\dots,2^n\}$-coloring of~$G$, there exist $n$ disjoint intervals
  in which the colors appear in the order $1,\dots,2^n$. For an illustration
  consider \cref{fig:LineLowerBound}.

  \begin{figure}[htb]
    \centering
    \includegraphics[page=1]{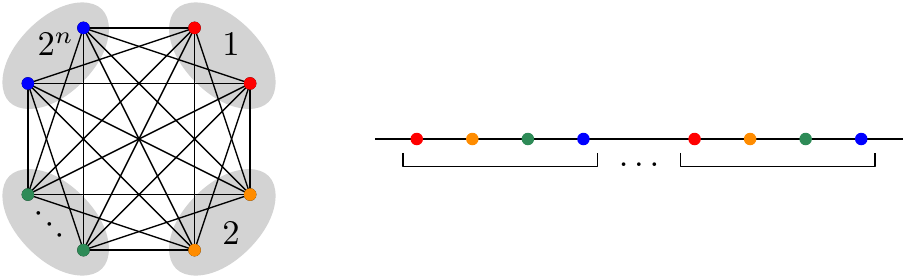}
    \caption{Illustration of $G$ and the ordering $<_L$ of the vertices on $L$ for  $n=2$ ($\ell=1$).}
    \label{fig:LineLowerBound}
  \end{figure}

  For a contradiction, suppose that there exists a scan cover $s:V\to \{0,1\}^k$ of $(G,<_L)$ with 
  $k:=\lceil\log_2 \chi(G)+\frac 1 4 \log_2 \log_2 \chi(G)\rceil-1
  =n+\frac{\ell}{4}-1$ steps.
  Thus, the number of different vectors is $2^k=2^{n-1} n^{1/4}$.

  Let $t$ denote the number of different color classes in which some vector is
  used at least $n^{3/4}$ times.
  We show that $t\geq \frac{1}{2}2^n$. Clearly,
  each vector may only appear in one color class, i.e., the color classes induce
  a partition of the set of vectors.
  Consider the $2^n-t$ color classes (and their assigned vectors) in which no vector is used $n^{3/4}$ times.
  Let $\delta$ denote the average usage of vectors in these classes.
  Note that $\delta$ is lower bounded by the ratio of the number of vertices, namely $(2^n-t)n$, and the maximum number of remaining vectors, namely $2^k-t$. Consequently, $\delta\geq \frac{n2^n-tn}{2^k-t}$.
  Moreover, $\delta<n^{3/4}$, because otherwise there exists a further color class for which
  some vector appears at least $n^{3/4}$ times. Therefore, we obtain the following
  chain of implications: \[\delta<n^{3/4} \implies \frac{n2^n-tn}{2^k-t} < n^{3/4} \iff
  t> 2^n\cdot\frac{1}{2(1-n^{-1/4})}  \implies t > \frac{1}{2}2^n\]

  For each of these $t$ color classes, we choose a vector with a maximal number of appearances and introduce an interval on $L$ from the first to the last occurrence.
  By the ordering of the vertices, every two vertices of the same color have a distance of at least $2^n$, and hence the interval spans at least $d=2^n n^{3/4}$ vertices.
  On average, every vertex is contained in the following number of intervals
  \[\frac{t\cdot d}{|V|}\geq\frac{\frac{1}{2}2^n\cdot2^nn^{3/4}}{n2^n}=\frac{2^n}{2 n^{1/4}}=2^{n-1}n^{-1/4}.\]
  By the pigeonhole principle, there exists a set $S$ of at least $2^{n-1}n^{-\nicefrac 1 4}$ vectors with mutually intersecting intervals.
  We claim that any two vectors $a$ and $b$ of $S$ are pairwise \emph{incomparable}, i.e., there exist two indices $i,j$ such that $a_i=0,b_i=1$ and $a_j=1,b_j=0$:
  Because the intervals intersect, among the four occurrences of $a$ and $b$ on $<_L$, there exist three such that they appear alternating.
  To scan the corresponding edges, the vectors must be incomparable.
  Thus, there must exist $2^{n-1}n^{-\nicefrac 1 4}$ pairwise incomparable vectors.

  However, by Sperner's theorem, every set of vectors of length $k$ contains at most $\binom{k}{\lfloor k/2\rfloor}$ pairwise incomparable vectors and 
  \[\binom{k}{\lfloor k/2\rfloor}\leq \sqrt{\frac{2}{k \pi}}2^k(1+\frac{1}{11})\leq2^k \frac{ 1}{\sqrt{k}}.\]
  It remains to show that the number of necessary incomparable vectors exceeds this:
  \begin{align*}
    2^k \cdot \frac{ 1}{\sqrt{k}}<\frac{2^n}{2 n^{1/4}}
    \iff  n<k
  \end{align*}
  This holds for $\ell>4$ and yields a contradiction. For $\ell= 4$ it holds that $k=n$. Thus, each color class has a unique vector, all of which need to be incomparable, a contradiction.
\end{proof}

\subsection{No Constant-Factor Approximation in 1D}

\cref{thm:boundLine} implies the following.
\begin{lem}\label{thm:coloring}
  A $C$-approximation algorithm for \mscG 
  implies a polynomial-time algorithm for computing a coloring of graph $G$, 
  $k:=\chi(G)$, with $4^C\cdot k^C\cdot \sqrt{\log_2(k)} ^C$ colors. 
\end{lem}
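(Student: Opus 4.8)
The plan is to turn an approximate scan cover into a proper coloring, exploiting the fact that in 1D a scan cover with $N$ steps is literally an assignment of heading-vectors $s\colon V\to\{0,1\}^N$ in which adjacent vertices receive distinct vectors. First I would set up the reduction: given an arbitrary graph $G$ whose chromatic number is $k=\chi(G)$, place its vertices on a line $L$ in any order to obtain a 1D instance $(G,<_L)$ of \mscG, and run the assumed $C$-approximation on it. Since a scan cover with $N$ steps uses at most $2^N$ distinct vectors, and since each edge $uv$ (say $u<_L v$) has an index $i$ with $s_i(u)=0,\ s_i(v)=1$, forcing $s(u)\neq s(v)$, these vectors directly constitute a proper coloring with at most $2^N$ colors. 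The whole argument therefore reduces to bounding $N$.

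The second step is to make the approximate output discrete and relate its number of steps to the optimum makespan $\mathrm{OPT}$. In 1D every angular cost $\alpha(uv,vw)$ is either $0$ (edges on the same side of the common vertex) or $180^\circ$ (opposite sides), so I would take the edge order induced by the approximate scan times and recompute the best scan cover for that order via the greedy formula $S(e_1)=0,\ S(e_i)=\max\{S(e_j)+\alpha(e_i,e_j)\}$. This does not increase the makespan (the original is already a scan cover for the same order), and it produces scan times that are multiples of $180^\circ$; writing the makespan as $180^\circ(N-1)$ yields $180^\circ(N-1)\le C\cdot\mathrm{OPT}$. As the optimum satisfies $\mathrm{OPT}=180^\circ(N^\ast-1)$ with $N^\ast$ the minimum number of steps, and \cref{thm:boundLine} bounds $N^\ast\le\lceil\log_2 k+\tfrac12\log_2\log_2 k+1\rceil$, I obtain $N-1\le C(N^\ast-1)\le C\bigl(\log_2 k+\tfrac12\log_2\log_2 k+1\bigr)$, using $\lceil y\rceil-1\le y$.

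Finally I would read off the coloring and count. The number of colors is at most
\[
2^{N}\le 2^{\,C(\log_2 k+\frac12\log_2\log_2 k+1)+1}=2^{C+1}\cdot k^{C}\cdot\sqrt{\log_2 k}^{\,C}\le 4^{C}\cdot k^{C}\cdot\sqrt{\log_2 k}^{\,C},
\]
where the last inequality uses $2^{C+1}\le 4^{C}$ for $C\ge 1$; relabeling the used vectors gives the claimed coloring, and every step (the approximation, sorting the edges, the greedy recomputation, and extracting the colors) runs in polynomial time.

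The subtle point will be the discretization: I must verify that the recomputed greedy scan cover assigns a consistent heading to every vertex at every step, i.e., that no vertex is simultaneously required to face left (for an edge to its left) and right (for an edge to its right) in the same step. This is exactly guaranteed by the $180^\circ$ cost between two edges on opposite sides of their common vertex, which the greedy formula enforces, so such a pair can never share a step; together with the routine ceiling arithmetic in the last display, this completes the argument.
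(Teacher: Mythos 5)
Your proof is correct and takes essentially the same route as the paper's: run the $C$-approximation on a 1D embedding of $G$, bound the optimal number of steps via \cref{thm:boundLine}, read off a proper coloring from the 0--1 heading vectors (adjacent vertices must receive distinct vectors), and finish with the same ceiling arithmetic to reach $4^C\cdot k^C\cdot \sqrt{\log_2(k)}^C$. The additional care you take with discretization (greedy recomputation of the scan cover to force multiples of $180^\circ$) and with heading consistency within a step only makes explicit what the paper's terser proof leaves implicit.
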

\begin{proof}
  Let $\ell^*$ denote the length of a minimum scan cover of $G$.  Then a $C$-approximation algorithm computes a scan cover of length $\ell\leq C \cdot \ell^*$. 
  \cref{thm:boundLine} implies that $C \cdot \ell^*\leq C\cdot \lceil \log_2 k+\frac 1 2 \log_2 \log_2 k+1\rceil$, yielding a coloring with $2^\ell$ colors.
  Thus, 
  \[
    2^\ell
    \leq 2^{C(\lceil \log_2 k+\frac 1 2 \log_2 \log_2 k+1\rceil)}
    \leq 2^{C\cdot \log_2 k}\cdot 2^{\frac{1}{2}\cdot C\cdot \log_2\log_2 k}\cdot 2^{2C}
    \leq 4^C\cdot k^C\cdot \sqrt{\log_2(k)} ^C.\qedhere
  \]
\end{proof}

\begin{theorem}
  \label{thm:noapx}
  Even in 1D, a $C$-approximation for \mscG for any $C\geq 1$ implies $P=NP$.
\end{theorem}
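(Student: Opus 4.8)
The plan is to combine \cref{thm:coloring} with the known strong inapproximability of the chromatic number. By \cref{thm:coloring}, a polynomial-time $C$-approximation for \mscG would yield a polynomial-time algorithm $A$ that properly colors every graph $G$ using at most $g(\chi(G))$ colors, where $g(k)=4^C k^C \sqrt{\log_2 k}^{\,C}$; for the fixed constant $C$ this is polynomial in the chromatic number $k=\chi(G)$. I want to show that such an algorithm decides an NP-hard promise problem, and hence forces $P=NP$.

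First I would recall the gap version of the hardness of graph coloring (Feige--Kilian and Zuckerman): for every constant $\varepsilon>0$ it is NP-hard to distinguish $n$-vertex graphs with $\chi(G)\le n^{\varepsilon}$ from those with $\chi(G)\ge n^{1-\varepsilon}$. Given the fixed constant $C\ge 1$ of the assumed approximation, I would choose $\varepsilon<\frac{1}{C+1}$ and instantiate this hardness with that $\varepsilon$, setting $a:=n^{\varepsilon}$ and $b:=n^{1-\varepsilon}$.

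The decision procedure is then simply to run $A$ and count the colors it uses. If $\chi(G)\le a$, then $A$ uses at most $g(a)$ colors; if $\chi(G)\ge b$, then every proper coloring---in particular the one produced by $A$---uses at least $\chi(G)\ge b$ colors. Hence the two promise cases are separated as soon as $g(a)<b$. Taking logarithms, $\log_2 g(a)=\varepsilon C\log_2 n + O(\log\log n)$ while $\log_2 b=(1-\varepsilon)\log_2 n$, so the choice $\varepsilon<\frac{1}{C+1}$ makes $g(a)<b$ for all sufficiently large $n$ (the finitely many remaining small instances can be handled by brute force). Thus counting the colors used by $A$ solves the NP-hard gap problem in polynomial time, which yields $P=NP$.

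The argument is a short reduction once \cref{thm:coloring} is in hand; the only point that needs care---and the step I expect to be the crux---is matching the polynomial-in-$\chi$ blow-up $g$ against the coloring gap. This is exactly where it matters that the inapproximability holds for \emph{every} $\varepsilon>0$: I must be able to push the ``easy'' side of the gap down to $n^{\varepsilon}$ with $\varepsilon$ small enough (depending on $C$) so that the degree-$C$ factor $a^C=n^{\varepsilon C}$ still stays below the ``hard'' side $n^{1-\varepsilon}$. A merely constant-factor coloring gap, or a hardness result phrased in terms of $n$ rather than $\chi(G)$, would not suffice here.
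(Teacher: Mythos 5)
Your proof is correct, but it takes a genuinely different route from the paper's. Both arguments hinge on \cref{thm:coloring}; they diverge in the external coloring-hardness theorem used to derive the contradiction. The paper invokes Khot's result that for all sufficiently large \emph{constant} $k$ it is NP-hard to color a $k$-colorable graph with at most $k^{\log_2(k)/25}$ colors; since that threshold is super-polynomial in $k$ while the guarantee $4^C k^C \sqrt{\log_2 k}^{\,C}$ from \cref{thm:coloring} is polynomial in $k$ for fixed $C$, the paper simply fixes one constant $k$ (depending on $C$) where the former exceeds the latter, with no asymptotics in $n$ and no gap engineering. You instead use the Feige--Kilian/Zuckerman gap (NP-hard to distinguish $\chi(G)\le n^{\varepsilon}$ from $\chi(G)\ge n^{1-\varepsilon}$ for every constant $\varepsilon>0$), which forces the tuning $\varepsilon<\frac{1}{C+1}$ and an argument for sufficiently large $n$ plus a brute-force patch for small instances; your accounting ($\varepsilon C\log_2 n+O(\log\log n)$ versus $(1-\varepsilon)\log_2 n$) is right, and the two-sided gap form you need is indeed what the known proofs provide. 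One remark: your closing sentence, claiming that a hardness result ``phrased in terms of $n$ rather than $\chi(G)$'' would not suffice, is confusingly worded, since your own gap thresholds are functions of $n$; what you actually need (and use) is that the yes-side of the gap pins down $\chi(G)\le n^{\varepsilon}$, so that the bound $g(\chi(G))$ from \cref{thm:coloring} applies, rather than a bare ``factor $n^{1-\varepsilon}$'' inapproximability statement. The trade-off between the two routes: Khot's theorem keeps the entire argument in terms of the chromatic number and avoids any interplay between $n$ and $C$, giving a shorter proof; your route rests on an older and more widely known hardness result, at the price of the $\varepsilon$-versus-$C$ calculation.
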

\begin{proof}
  Suppose there is a $C$-approximation for some constant $C>1$. By
  \cref{thm:coloring}, a $C$-approximation of \mscG in 1D implies that there
  is a polynomial-time algorithm for finding for every $k$-colorable graph $G$ a
  coloring with $4^C\cdot k^C\cdot \sqrt{\log_2(k)} ^C$ colors.
  Khot~\cite{khot2001improved} showed that, for sufficiently large $k$, it is NP-hard to color a $k$-colorable graph with at most $k^{\log_2(k)/25}$ colors.
  However, for every $C$ we can find a $k$ such that $4^C\cdot k^C\cdot \sqrt{\log_2(k)} ^C <k^{\log_2(k)/25}$. This yields a
  polynomial-time algorithm for an $NP$-hard problem, implying that $P=NP$.
\end{proof}

\subsection{Polynomially Solvable Cases}

Even though there is no constant-factor approximation in general, we would like to note that bipartite and complete graphs in 1D can be solved in polynomial time. 

\begin{observation}\label{obs:1dbipopt}
  For instances of \mscG in 1D for which the underlying graph $G$ is bipartite, there exists a polynomial-time algorithm for computing an optimal scan cover.
\end{observation}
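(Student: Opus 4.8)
The plan is to exploit the fact that in 1D all angular costs lie in $\{0^\circ,180^\circ\}$, so (as already used to discretize scan times) every scan cover has value $180^\circ(N-1)$ for some integer $N\geq 1$, and the optimum is governed entirely by the smallest number of steps $N$ that admits a valid assignment $s\colon V\to\{0,1\}^N$. Hence it suffices to identify this minimal $N$ together with the corresponding assignment.

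First I would bound the optimum from above. Since $G$ is bipartite we have $\chi(G)\leq 2$; if $G$ has no edges the scan time is trivially $0^\circ$, so assume $\chi(G)=2$. Applying \cref{thm:boundLine} with $\chi(G)=2$ yields a scan cover with $N\leq 2$ steps, i.e., of value at most $180^\circ$. Concretely, one may $2$-color $G$ (in linear time) and assign the incomparable vectors $01$ and $10$ to the two color classes; then every edge $uv$ is scanned either in step $1$ or in step $2$, depending on the orientation of its endpoints under $<_L$. Consequently the optimal value lies in $\{0^\circ,180^\circ\}$, and the entire problem collapses to deciding whether a single step suffices.

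Next I would characterize exactly when $N=1$ is achievable, i.e., when the optimum equals $0^\circ$. A one-bit assignment scans the edge $uv$ with $u<_L v$ if and only if $s_1(u)=0$ and $s_1(v)=1$; requiring this for every edge forces every vertex that is a left endpoint of some edge to carry bit $0$ and every vertex that is a right endpoint of some edge to carry bit $1$. Therefore $N=1$ is feasible precisely when no vertex simultaneously has a neighbor to its left and a neighbor to its right under $<_L$ (equivalently, the edges form a single directed cut consistent with the ordering). This condition, and the associated assignment (bit $1$ for vertices with only left neighbors, bit $0$ for vertices with only right neighbors, and an arbitrary bit for isolated vertices), is checkable in linear time. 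The algorithm then returns the single-step cover when the condition holds, and the two-step cover otherwise.

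The argument is short, so the only place that needs care is the correctness of the $N=1$ characterization together with the edge cases (empty graphs and isolated vertices). I do not anticipate a genuine optimization obstacle here: bipartiteness restricts the range of possible optima to exactly the two values $0^\circ$ and $180^\circ$, and distinguishing them is an elementary consistency check on the orientation induced by $<_L$.
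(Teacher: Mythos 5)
Your proposal is correct and follows essentially the same route as the paper: use \cref{thm:boundLine} with $\chi(G)=2$ (computable in polynomial time for bipartite graphs) to get a two-step cover, and observe that one step suffices exactly when no vertex has neighbors on both sides under $<_L$, which also gives the matching lower bound of two steps otherwise. Your write-up just makes the one-step characterization and the edge cases slightly more explicit than the paper does.
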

\begin{proof}
  We assume that $\chi(G)=2$, otherwise there is no edge to scan. If for
  every vertex, all its neighbors lie either before or after it, $G$ can be scanned within one
  step, which is clearly optimal.
  Otherwise, every scan cover needs at least two
  steps. By \cref{thm:boundLine}, there exists a scan cover with $2$ steps.
  Because bipartite graphs can be colored in polynomial time, the proof of
  \cref{thm:boundLine} provides a scan cover.
\end{proof}

\begin{observation}\label{obs:1dcomplopt}
  For instances of \mscG in 1D for which the underlying graph $G$ is a complete
  graph, there exists a polynomial-time algorithm for computing an optimal scan
  cover.
\end{observation}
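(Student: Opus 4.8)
The plan is to determine the optimum exactly for a complete graph $K_n$ (with $n:=|V|$) under a fixed order $v_1 <_L \dots <_L v_n$, namely $\lceil \log_2 n\rceil$ steps, and then exhibit an optimal scan cover that can be written down directly. The first step is to translate the scannability condition into the language of the $0$-$1$-vectors $s(v)$. Since $K_n$ requires every pair $v_a,v_b$ with $a<b$ to be scanned, the definition demands that for all $a<b$ there is a coordinate $\ell$ with $s_\ell(v_a)=0$ and $s_\ell(v_b)=1$. This fails for a pair exactly when $s(v_b)\le s(v_a)$ holds componentwise, i.e.\ when $s(v_a)$ \emph{dominates} $s(v_b)$. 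Hence an assignment $s$ is a scan cover of $(K_n,<_L)$ with $N$ steps if and only if no earlier vector dominates a later one, that is, $s(v_a)\not\ge s(v_b)$ for every $a<b$.

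For the lower bound I would note that if $s(v_a)=s(v_b)$ for some $a<b$, then $s(v_a)\ge s(v_b)$, which is forbidden; thus all $n$ vectors must be pairwise distinct. As $\{0,1\}^N$ contains only $2^N$ vectors, this forces $2^N\ge n$ and hence $N\ge\lceil\log_2 n\rceil$. Because the value of a scan cover with $N$ steps is $180^\circ(N-1)$, no scan cover can be shorter than $180^\circ(\lceil\log_2 n\rceil-1)$.

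For the matching upper bound I would give an explicit construction with $N=\lceil\log_2 n\rceil$: list $n$ distinct vectors of $\{0,1\}^N$ in order of nondecreasing Hamming weight (ties broken arbitrarily), and set $s(v_i)$ to be the $i$-th vector in this list. To verify the condition, fix $a<b$; then the weight of $s(v_a)$ is at most the weight of $s(v_b)$. If $s(v_a)$ dominated $s(v_b)$, its weight would be at least that of $s(v_b)$, forcing equal weights and hence $s(v_a)=s(v_b)$, contradicting distinctness. So no earlier vector dominates a later one, and $s$ is a valid scan cover. Together with the lower bound this shows the optimum equals $\lceil\log_2 n\rceil$ steps, and the assignment is clearly computable in polynomial time, which proves the observation.

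I do not expect a genuine obstacle here; the only point deserving care is confirming that the naive counting bound $\lceil\log_2 n\rceil$ is actually attainable despite the \emph{fixed} geometric order, and this is exactly what the Hamming-weight ordering (equivalently, any linear extension of the Boolean lattice) guarantees, since placing lighter vectors earlier makes domination of a later vertex by an earlier one impossible. I would close with the remark that this recovers, for the transitive tournament induced by $<_L$, the value $\lceil\log_2 n\rceil$ and improves on the general estimate of \cref{thm:boundLine} for this special case.
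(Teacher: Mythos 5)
Your proof is correct, but it takes a more self-contained route than the paper's. The paper's argument is very short: for the lower bound it uses the fact that every scan cover induces a cut cover together with the known cut cover number $c(K_n)=\lceil\log_2 n\rceil$ of complete graphs (via $c(G)=\lceil\log_2\chi(G)\rceil$); for the upper bound it scans recursively, splitting the vertex set into halves with respect to $<_L$, which yields $\lceil\log_2 n\rceil$ steps. You instead work entirely inside the $0$-$1$-vector formulation: you characterize feasibility as ``no earlier vector dominates a later one,'' get the lower bound from the elementary observation that all $n$ vectors must be pairwise distinct (so $2^N\geq n$), and match it by listing vectors in nondecreasing Hamming weight. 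The two constructions are really the same idea in different clothing: the paper's recursive halving essentially amounts to assigning $v_i$ the binary representation of $i-1$, i.e., numerical order, and both numerical order and your weight order are linear extensions of the Boolean dominance order, which is exactly the property needed for no earlier vector to dominate a later one. What your version buys is independence from the cut cover results of Loulou and Motwani--Naor, making \cref{obs:1dcomplopt} self-contained and the exact optimum $180^\circ(\lceil\log_2 n\rceil-1)$ explicit for every ordering $<_L$; what the paper's version buys is brevity, since it reuses machinery already established earlier in the section.
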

\begin{proof}
  Because every scan cover induces a cut cover and $c(G)=\lceil \log_2 n
  \rceil$, it suffices to provide a scan cover with this number of steps.
  To this end, we recursively scan the bipartite graphs induced by two vertex sets when split into halves with respect to $<_L$.
\end{proof}

\section{Two-Dimensional Point Sets}\label{sec:2D}

For two-dimensional point sets, we show that even for bipartite graphs,
it is hard to approximate \mscG better than $\nicefrac{3}{2}$.
Conversely, we present a $\nicefrac{9}{2}$-approximation algorithm for these graphs and apply the gained insights to achieve approximations for $k$-colorable graphs.

\subsection{Bipartite Graphs}

By \cref{thm:noapx}, we cannot hope for a constant-factor approximation for general graphs.
However, bipartite graphs in 1D can be solved in polynomial time.
We show that the added
geometry of 2D makes the \mscG hard to approximate even for bipartite graphs.

\subsubsection{No Approximation Better than 1.5 for Bipartite Graphs in 2D}
As a stepping stone for the geometric case, we establish the following.
\begin{restatable}{lem}{lemmaabstracthardness}
  \label{th:abstract_hardness}
  It is NP-hard to approximate \mscA better than $\nicefrac{3}{2}$ even for bipartite graphs.
\end{restatable}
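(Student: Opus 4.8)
\textbf{Proof proposal for \cref{th:abstract_hardness}.}
The plan is to reduce from a known NP-hard coloring or satisfiability problem, exploiting the tight relationship between scan covers and the chromatic number established in \cref{thm:boundLine} and \cref{thm:noapx}. Since the statement claims a concrete inapproximability gap of $\nicefrac{3}{2}$, a $\Theta(\log\chi)$-style reduction is too coarse; instead I would look for a gadget-based reduction that forces the optimum scan time of a bipartite instance to take one of two widely separated values depending on a yes/no instance of some NP-hard decision problem. The natural source is a variant of \textsc{NAE-3SAT} or graph 3-coloring, encoded so that a satisfiable/3-colorable instance admits a scan cover of some value $L$, while an unsatisfiable/non-3-colorable instance forces every scan cover to cost at least $\nicefrac{3}{2}\,L$.

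First I would fix the cost function $\alpha$ on the abstract graph so that incident edges fall into a small number of discrete cost classes — for instance, costs drawn from $\{0,1\}$ or $\{1,2\}$ scaled appropriately — chosen so that the per-vertex scan-cover constraint $|S(e)-S(e')|\ge\alpha(e,e')$ behaves like a combinatorial ordering/coloring constraint. Bipartiteness of the underlying graph must be preserved throughout, so I would build the instance as a bipartite graph with one side playing the role of variables (or color-choice vertices) and the other side enforcing consistency constraints, with the metric $\alpha$ encoding which assignments of scan times are cheap. Because a scan cover is completely determined by an edge order (as noted just after the problem definition), I would argue about feasible schedules combinatorially: a low-cost schedule must partition edges into a few ``batches'' scanned at nearly the same time, and the $\alpha$-constraints force these batches to respect the intended logical structure.

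The two directions would then be: completeness, showing a satisfying assignment yields a schedule of makespan $L$ by scheduling the edges in the order dictated by the assignment and checking all $\alpha$-constraints; and soundness, showing that any schedule of makespan $< \nicefrac{3}{2}\,L$ can be decoded into a satisfying assignment, typically by a pigeonhole or interval argument (as in the proof of \cref{lem:1dtight}) that any cheap schedule cannot simultaneously satisfy conflicting constraints. The gap of exactly $\nicefrac{3}{2}$ suggests the ``yes'' value is $2$ (two scan rounds) and the ``no'' value is $3$, so the crux is designing a gadget where a consistent assignment is realizable in two rounds but any inconsistency provably forces a third round.

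The main obstacle I anticipate is the soundness direction together with the metricity requirement on $\alpha$. Metricity (the triangle inequality across incident edges at a shared vertex) sharply limits how freely the gadget costs can be set, so I must verify that the intended costs actually form a metric while still being rigid enough to force the third round in the ``no'' case; reconciling these two pressures — enough freedom to encode hard instances, enough rigidity to force the $\nicefrac{3}{2}$ gap, all within a bipartite metric — is where the real work lies. I would therefore spend most effort constructing a concrete constraint gadget, proving its local behavior forces an extra round under any violated constraint, and then stitching gadgets together so that a globally cheap schedule exists if and only if the original instance is a yes-instance.
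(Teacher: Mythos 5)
Your proposal correctly anticipates the paper's high-level strategy: the paper does reduce from \textsc{Not-All-Equal-3-Sat}, does use a small set of discrete transition costs, and does create exactly the gap you predict (a scan time of $2\phi$ for satisfiable instances versus at least $3\phi$ otherwise, yielding the $\nicefrac{3}{2}$ factor). However, your text is a plan rather than a proof: the entire content of the lemma lies in the gadget construction and the soundness argument, and you explicitly defer both (``I would look for a gadget-based reduction\dots'', ``the real work lies'' in constructing it). What is missing is the concrete mechanism that makes a 2-vs-3 gap enforceable. In the paper this is achieved with three edge types: each clause becomes a star of three \emph{clause edges} (any pair involving a clause edge costs $\phi$, so the three clause edges of a gadget must occupy three distinct time steps among $\{0,\phi,2\phi\}$); each variable becomes a path of two \emph{variable edges} through literal vertices (any pair involving a variable edge costs $2\phi$, so in a three-step schedule every edge at a literal vertex is pinned to time $0$ or $2\phi$, encoding a Boolean value); and \emph{incidence edges} (cost $0$ among themselves) transmit these values to the clause gadgets. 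Soundness then follows almost immediately: the three clause edges occupying all three steps force one clause edge at step one and one at step three, whose incidence and variable edges decode to a true and a false literal in that clause --- precisely the NAE property. Completeness requires a further idea you do not address: designating one ``responsible'' positive and one ``responsible'' negative literal per clause, with the third literal scheduled in the middle step.

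Two further points. First, your worry about metricity, while reasonable, dissolves under the paper's cost assignment: with costs in $\{0,\phi,2\phi\}$ and the largest cost only between edges meeting at literal or variable vertices, the triangle inequality at every shared vertex is checked in a line or two, so metricity is not the obstacle you anticipate. Second, your proposed architecture (``one side playing the role of variables\dots the other side enforcing consistency'') does not obviously stay bipartite once clause gadgets need three mutually constrained edges; the paper's layered structure (clause vertex -- entry vertices -- literal vertices -- variable vertex, with edges only between consecutive layers) gives bipartiteness for free, and you would need to verify this property for whatever gadget you end up building. As it stands, the proposal identifies the right reduction source and the right target gap but contains no construction to check, so it cannot be accepted as a proof.
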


\begin{proof}
	The proof is based on a reduction from \textsc{Not-All-Equal-3-Sat} where  a satisfying assignment fulfills the property that each clause has a {\tt true} and a {\tt false} literal, i.e., all false or all true is prohibited. 
	The nice feature of this variant is that the negation of a satisfying assignment is also a satisfying assignment.
	
	For every instance $I$ of \textsc{Not-All-Equal-3-Sat}, we construct a graph $G_I$ and a cost function $\alpha$ where each edge pair has a transition cost of $0$, $\phi$, or $2\phi$. Thus, every optimal scan cover has discrete time steps at distance $\phi$.   We show that there exists a scan cover of $(G_I,\alpha)$ with three time steps, i.e., a  scan time of $2\phi$, only if $I$ is a satisfiable instance. Otherwise, every scan cover has at least four steps, i.e., a value of $3\phi$.
	
	We now describe our construction of $G_I$, which is a special variant of a clause-variable-incidence graph.  For an illustration, see \cref{fig:nphardness_abstract_example}.
	\begin{figure}[htb]
		\centering
		\includegraphics[page=2]{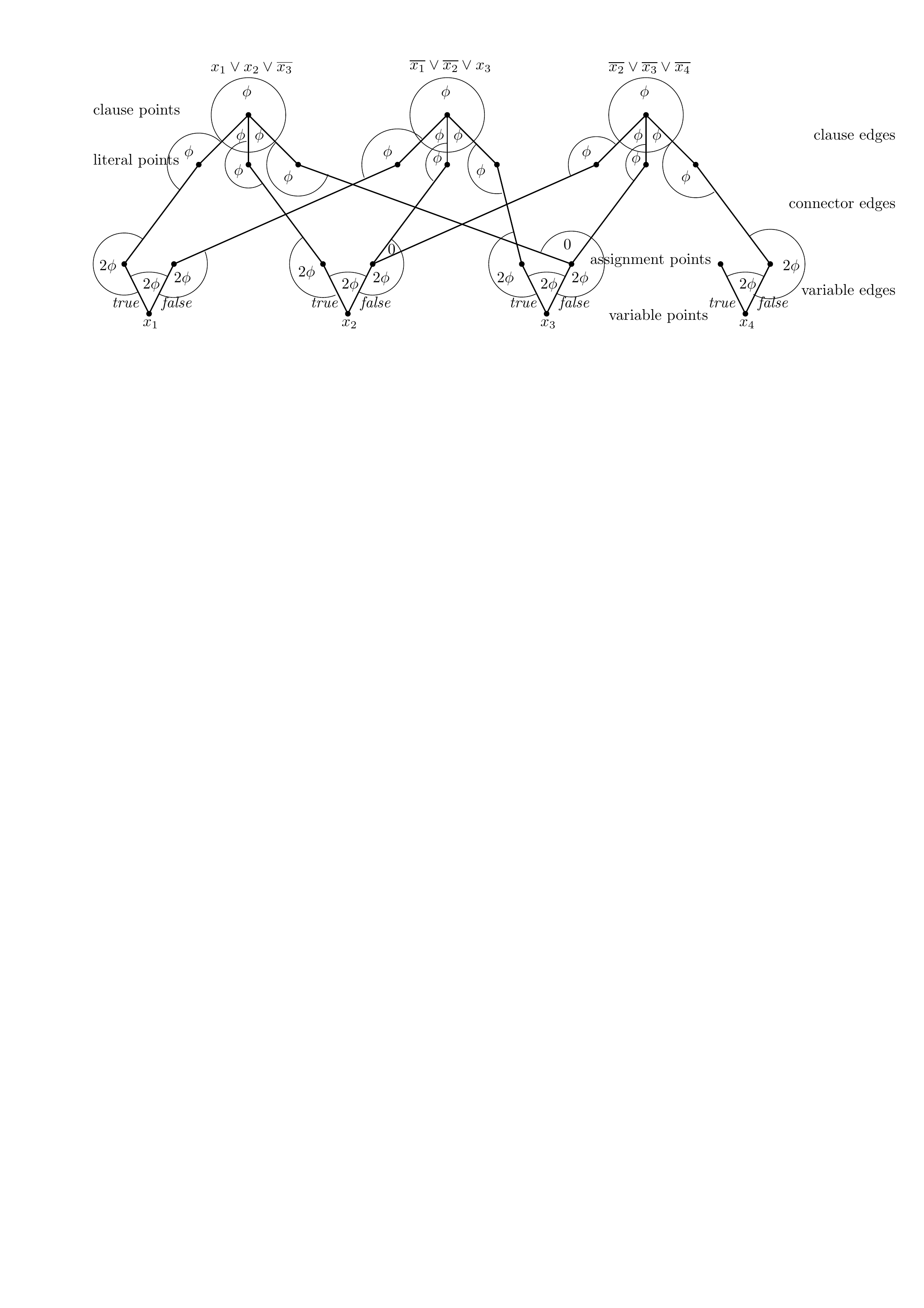}
		\caption{Illustration of the graph $G_I$ for the instance 
			$I=(x_1\lor x_2\lor\overline{x_3})\land(\overline{x_1}\lor\overline{x_2}\lor x_3)\land(\overline{x_1}\lor\overline{x_2}\lor\overline{x_3})$. The edge set consists of clause edges (blue), incidence edges (orange), and variable edges (black).}
		\label{fig:nphardness_abstract_example}
	\end{figure}
	There are four types of vertices and three types of edges:
For every clause $C_i$ of $I$, we introduce a \emph{clause gadget} consisting of a \emph{clause vertex} and  three \emph{entry vertices}, each of which represents one of the literals appearing in the clause. The clause vertex is adjacent to every entry vertex of its gadget by a  \emph{clause edge}.  
For every variable $x_i$ of $I$,  we introduce a \emph{variable vertex} and two \emph{literal vertices}. The variable vertex is adjacent to both literal vertices via a \emph{variable edge}.
Moreover, for every entry vertex, we introduce an \emph{incidence edge} to the literal vertex that it represents. 
	
	We define $\alpha$ as follows: The transition cost for any edge pair is $\phi$ if it contains a clause edge, $2\phi$ if it contains a variable edge, and 0 otherwise. Note that every variable and clause edge are pairwise disjoint; hence this is well-defined.

	We now show that if $I$ is a satisfiable instance of \textsc{Not-All-Equal-3-Sat}, then there exists a scan cover with three time steps: 
	If a literal is set to true, then the variable edge of this literal vertex  is scanned in the first time step and all remaining edges of the literal vertex in the third step. Likewise, if a literal is false, then its variable edge is scanned in the third step, and all other incident edges in the first step.
	
	For each clause we choose one positive and negative literal to be \emph{responsible}, the third literal is \emph{intermediate}. The clause edges are scanned in the first, second, or third step, depending on whether their entry vertex corresponds to a responsible positive literal, an intermediate literal, or a responsible negative literal, respectively. Note that the edge pairs with transition costs of $2\phi$, namely the edges incident to literal vertices, are scanned in the first or third step. Thus, the value of this scan cover is $2\phi$.  For an example, consider \cref{fig:nphardness_abstract_exampleScan}.
	
	\begin{figure}[htb]
		\centering
		\includegraphics[page=3]{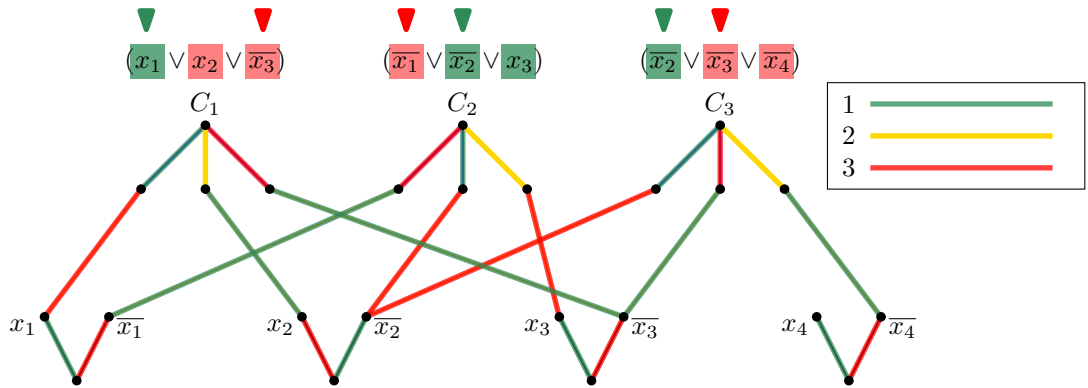}
		\caption{Illustration of a scan cover of the graph $G_I$ as in \cref{fig:nphardness_abstract_example}. Green edges are scanned in the first, yellow in the second, and red edges in the third step.}
		\label{fig:nphardness_abstract_exampleScan}
	\end{figure}
	
	Now, we consider the reverse direction and show that a scan cover with three time steps corresponds to a satisfying assignment of $I$. Because the transition cost of any two edges incident to a literal vertex is $2\phi$, each variable or incidence edge is scanned either in the first or third step.
	Therefore, we may define an assignment of $I$ by setting the literals whose variable edge is scanned in the first time step to true.
	It remains to argue that in this assignment, every clause has a true and false literal.
	Note that the three edges of a clause gadgets, must be scanned at different time steps. Consequently, there exists a clause edge that is scanned in the first time step. Its adjacent incident edge is therefore scanned in the third step. This implies that the variable edge of the literal vertex is also scanned in the first time step and thus set to true. Likewise, the clause gadget in the third step corresponds to a false literal. Consequently, this assignment shows that $I$ is a true-instance of \textsc{Not-All-Equal-3-Sat}.
\end{proof}

We now use \cref{th:abstract_hardness} for showing hardness of bipartite graphs in the geometric version.

\begin{restatable}{theorem}{theoremhardnessbipartite}\label{thm:hardnessBIP}
  Even for bipartite graphs in 2D, a C-approximation for \mscG for any $C < \nicefrac{3}{2}$ implies P = NP.
\end{restatable}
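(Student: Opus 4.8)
The plan is to transfer the abstract hardness of \cref{th:abstract_hardness} into the geometric world by realizing the abstract instance $(G_I,\alpha)$ as a concrete planar point set whose induced angles reproduce the transition costs $0$, $\phi$, and $2\phi$. Concretely, I would fix an angle $\phi$ with $2\phi<\pi$ (say $\phi=60^\circ$, so that $2\phi=120^\circ$ is realizable) and place the vertices so that at every literal vertex the variable edge encloses an angle of exactly $2\phi$ with each incidence edge while the incidence edges run \emph{collinearly} (angle $0$, achieved by putting the corresponding entry vertices on a common ray); at every variable vertex the two variable edges enclose $2\phi$; at every entry vertex the clause edge and the incidence edge enclose $\phi$; and at every clause vertex the three clause edges point into the three directions $0$, $\phi$, $2\phi$. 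Makespan is then measured in the same units as in the abstract version, and the target is again the gap between a scan cover of value $2\phi$ (three slots $0,\phi,2\phi$) and one of value $3\phi$, which yields the factor $\nicefrac{3}{2}$.

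For the \textbf{soundness} direction I would observe that the geometric constraints are nowhere weaker than the abstract ones: all prescribed angles match $\alpha$ exactly, and the only discrepancy is the outer clause-edge pair (directions $0$ and $2\phi$), whose actual angle is $2\phi$ rather than the abstract $\phi$ — a \emph{strengthening}. Hence the argument of \cref{th:abstract_hardness} carries over verbatim and is in fact pinned down more tightly: in any scan cover of value $2\phi$ the outer clause edges are forced to the extreme slots $0$ and $2\phi$ and the middle clause edge to slot $\phi$; chasing the $\phi$-constraint across the entry vertex and the $2\phi$-constraint across the literal vertex then certifies a true literal at slot $0$ and a false literal at slot $2\phi$, so every clause is \textsc{Not-All-Equal}-satisfied. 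Consequently an unsatisfiable instance admits no scan cover of value $2\phi$, and since every relevant angle is a multiple of $\phi$ its optimum is at least $3\phi$.

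For \textbf{completeness} I would start from a satisfying assignment and reuse the abstract schedule, scanning each variable edge in slot $0$ or $2\phi$ according to its truth value and sending the clause edges to $0,\phi,2\phi$. The step I expect to be the \emph{main obstacle} is twofold and entirely geometric. First, realizability: the exact requirements (collinear incidence edges per literal vertex, exact $\phi$ and $2\phi$ angles, three prescribed clause-edge directions) couple the whole embedding, and one must verify that a consistent planar placement exists and that no unintended incident pair introduces an extra constraint — which I expect to be arrangeable by spreading the gadgets far apart and tuning edge lengths and ray directions. Second, and more delicate, the geometry pins the clause edge in direction $\phi$ to the \emph{intermediate} slot, so completeness requires the embedded ``middle'' literal of each clause to lie in the majority (two-of-three) value class, even though the embedding must be fixed before a satisfying assignment is known; this I would resolve by using the freedom in choosing which true and which false literal of a clause is made responsible, together with the all-flip symmetry of \textsc{Not-All-Equal-3-Sat}, so that some satisfying assignment is always compatible with the fixed directions (and, failing an exact fit, by a small-perturbation argument that keeps the gap, since only the strict factor $C<\nicefrac{3}{2}$ must be ruled out). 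Combining the two directions shows that distinguishing optimum $2\phi$ from optimum $3\phi$ is NP-hard even for bipartite point sets in the plane, which proves \cref{thm:hardnessBIP}.
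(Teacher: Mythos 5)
Your soundness direction is fine --- since every geometric angle in your embedding is at least the corresponding abstract cost, any geometric scan cover induces an abstract one, so unsatisfiable instances still need makespan $3\phi$. The fatal problem is completeness, and it is not the repairable ``delicate'' issue you flag; it is structural. With $\phi=60^\circ$ and the three clause edges at directions $0,\phi,2\phi$, the outer pair has geometric angle $2\phi$ rather than the abstract $\phi$. In any makespan-$2\phi$ schedule this pins the two outer clause edges to the extreme times $0$ and $2\phi$ and the middle clause edge to time $\phi$; chasing the constraints through the entry and literal vertices shows that the outer clause edge at time $0$ (resp. $2\phi$) forces its literal to be true (resp. false), while the middle literal is completely unconstrained (its incidence edge may sit at either extreme). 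Hence your instance admits a makespan-$2\phi$ schedule if and only if there is an assignment in which, for every clause, the two literals you embedded in the \emph{outer} directions get opposite values. That is a system of two-variable XOR constraints --- solvable in polynomial time --- and it is strictly stronger than \textsc{Not-All-Equal} satisfaction: a formula can be NAE-satisfiable while every NAE-satisfying assignment places the minority literal of some clause in your fixed middle position. Your proposed repairs do not work: the ``freedom in choosing responsible literals'' does not exist geometrically (the time slots are pinned by truth values, not by bookkeeping), and the all-flip symmetry preserves which literal is the minority one. Indeed, no polynomial-time rule for choosing each clause's middle literal can rescue the reduction: if it did, NAE-3-SAT would reduce to a polynomial-time-solvable XOR system, so the reduction's correctness would itself presuppose $P=NP$.

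The paper avoids exactly this obstruction by choosing $\phi=120^\circ$: three edges in directions $0^\circ,120^\circ,240^\circ$ have \emph{all} pairwise angles equal to $120^\circ$, so the clause gadget is realized angle-exactly and the three clause edges remain interchangeable, preserving the abstract completeness argument. The price is that $2\phi=240^\circ$ exceeds $180^\circ$ and cannot occur as a planar angle; the paper simulates these costs combinatorially by inserting auxiliary leaf edges into the reflex $240^\circ$ region (so that taking the short $120^\circ$ turn would force an additional turn of nearly $360^\circ$), and it replaces incidence edges by three-edge paths with $240^\circ$ interior angles, which both propagate truth values and decouple the gadget placements --- resolving the realizability concern you raise. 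In short, the correct instantiation is the opposite of yours: make $\phi$ large enough that the clause gadget is angle-exact and engineer the $2\phi$ costs by gadgetry, rather than making $2\phi$ angle-exact and hoping the clause gadget survives.
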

\begin{proof}
	Suppose that there is a $(\nicefrac{3}{2}-\varepsilon)$-approximation for some $\varepsilon>0$.
	For every instance $I$ of \textsc{Not-All-Equal-3-Sat}, we can construct a graph $G_I$ for \mscG in 2D such that it has a scan time of $240^\circ$ if $I$ is satisfiable, and a scan time of at least $360^\circ-\varepsilon$ otherwise.
	We essentially use the same reduction as in the proof of \cref{th:abstract_hardness}.
	It remains to embed the constructed graph $G_I$ in the plane such that the transition costs are reflected by the angle differences.  	The basic idea is to embed $G_I$ on a triangular grid; see \cref{fig:hardness_on_plane_1_5} for some of the gadgets.
	
	\begin{figure}[htb]
		\centering
		\includegraphics[page=3]{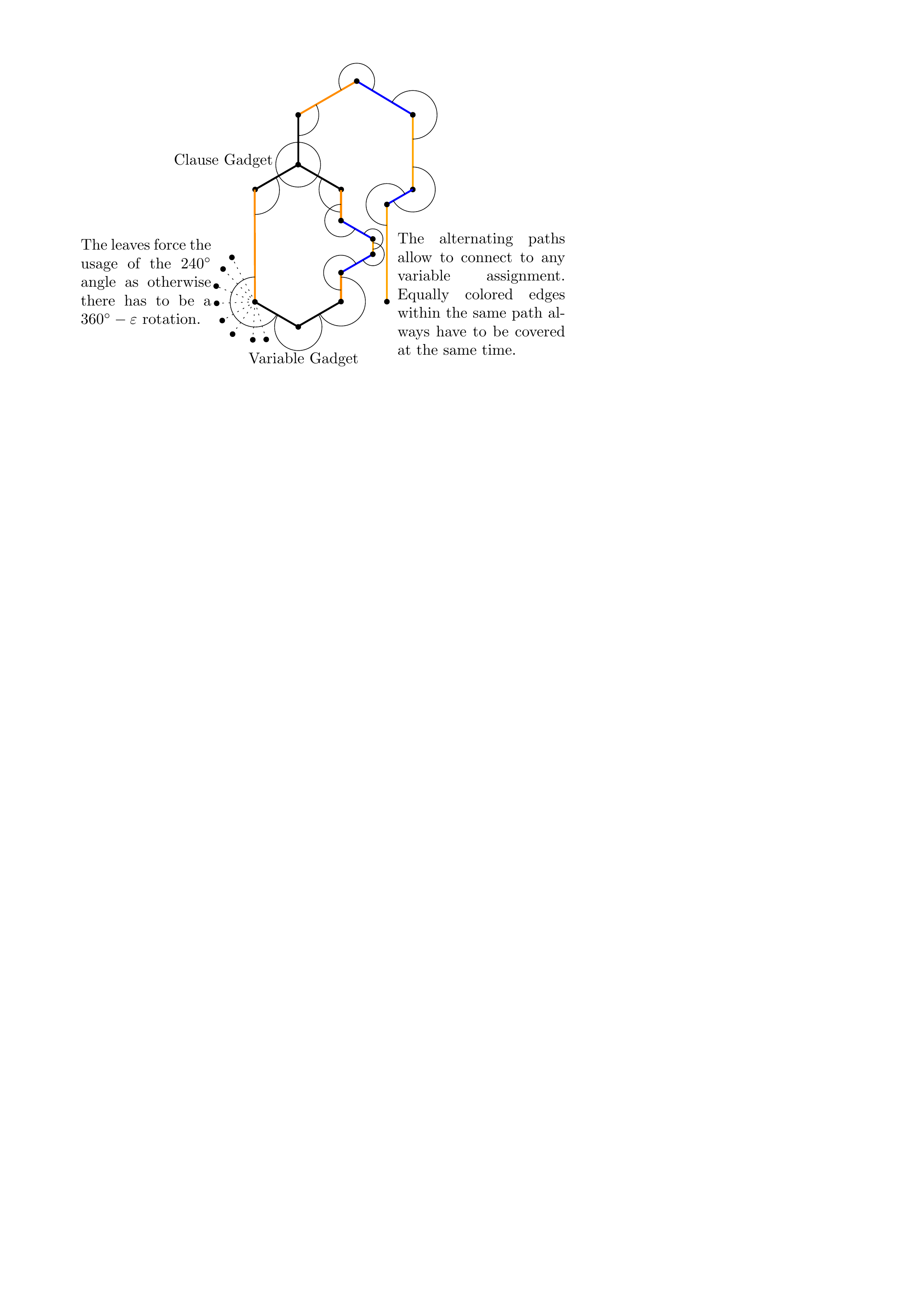}
		\caption{Embedding the graph $G_I$ into the plane by using $\phi=120^\circ$. Additional leaves are added to force the usage of the larger angle of $240^\circ$. The clause and variable gadgets are connected by paths instead of edges (solid and dashed orange edges).}
		\label{fig:hardness_on_plane_1_5}
	\end{figure}

	In particular, we choose $\phi=120^\circ$.
	For each \emph{clause gadget} we create a star on four vertices with $120^\circ$ angles between the edges.
	The incidence edges also leave with $120^\circ$ from the three entry vertices.
	
	The vertices of the \emph{variable gadget} can also easily be embedded in the triangular grid. However, because the smaller angle between any two segments is at most $180^\circ$, we cannot directly construct angles of $240^\circ$.
	Therefore, we insert additional edges and vertices into the $240^\circ$ angle with an angle difference of $\varepsilon$ as illustrated in \cref{fig:hardness_on_plane_1_5}.
	If an incident vertex uses the shorter $120^\circ$ angle, it would still need to cover the additional edges resulting in an overall turning angle of at least $360^\circ-\varepsilon=3\phi-\varepsilon$.
	
	To connect the \emph{clause gadgets} with the \emph{variable gadgets} we now need \emph{incidence paths} instead of incidence  edges.
	We use paths consisting of three edges with angles of $240^\circ$ on the interior vertices.
	A path will propagate the decision by always scanning all odd or all even edges at the same time with a difference of $240^\circ$.
	Thus, the first and the last edge of it are scanned at the same time.
	
	If we allow the points to share the same coordinates, we can position all clause and variable gadgets at the same locations, respectively. This results in a constant number of coordinates.
	
	If all coordinates shall be unique, the gadgets can easily be moved up or down as the \emph{incident paths} can be stretched.
	This replicates the behavior of the original construction except of a tiny angle difference of $\varepsilon$ for the $2\phi$ angles.

	A $(\nicefrac{3}{2}-\varepsilon)$-approximation would now yield for a satisfiable instance a scan time of at most $(\nicefrac{3}{2}-\varepsilon)\cdot240^\circ=360^\circ-\varepsilon\cdot240^\circ$ and decide the satisfiability because an unsatisfiable solution would have a scan time of at least $360^\circ-\varepsilon>360^\circ-\varepsilon\cdot240^\circ$.
	This is a contradiction to the NP-hardness of \textsc{Not-All-Equal-3-Sat}.
\end{proof}


\subsubsection{4.5-Approximation for Bipartite Graphs in 2D}
Conversely, we give absolute and relative performance guarantees for bipartite graphs in 2D.
\begin{theorem}\label{lemma:scan_points_splited_by_line}
  Let $I=(P,E)$ be a bipartite instance of \mscG with vertex classes $P=P_1\cup P_2$.
  Then $I$ has a scan cover of time $360^\circ$.
  Moreover, if $P_1$ and $P_2$ are separated by a line, there is a scan cover of time $180^\circ$.
\end{theorem}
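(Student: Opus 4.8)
The plan is to prove both statements at once with a single explicit construction, using bipartiteness only to orient the edges consistently. First I would orient every edge from its endpoint in $P_1$ to its endpoint in $P_2$; since $G$ is bipartite, this makes every vertex in $P_1$ a pure source and every vertex in $P_2$ a pure sink. Fixing a reference direction in the plane, for an edge $uv$ with $u\in P_1$ and $v\in P_2$ let $\theta(uv)\in[0^\circ,360^\circ)$ denote the heading of the ray $\vec{uv}$ (the direction pointing from the $P_1$-endpoint to the $P_2$-endpoint), and simply set $S(uv):=\theta(uv)$. The scan time of this assignment is $\max_{uv}\theta(uv)<360^\circ$, which already yields the first claim once we verify that $S$ is a valid scan cover.

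The crux is checking the scan-cover inequality $|S(e)-S(e')|\ge\alpha(e,e')$ at every shared vertex, which I would split by vertex type. At a source $u\in P_1$, two incident edges $uv,uv'$ have headings $\theta,\theta'$, so $|S(uv)-S(uv')|=|\theta-\theta'|$, while the turn angle $\alpha(uv,uv')$ is the \emph{smaller} angle between $\vec{uv}$ and $\vec{uv'}$, namely $\min(|\theta-\theta'|,\,360^\circ-|\theta-\theta'|)\le|\theta-\theta'|$; hence the inequality holds (with slack exactly when the difference wraps past $180^\circ$). The main obstacle, and the only place any care is needed, is the sink side: at $v\in P_2$ the relevant directions are $\vec{vu},\vec{vu'}$, the reversals of the rays used to define $S$. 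Here I would observe that reversing both edges shifts both headings by $180^\circ$, so their difference is unchanged; thus the angle seen at the sink again equals the smaller version of $|\theta(uv)-\theta(u'v)|=|S(uv)-S(u'v)|$, and the same inequality applies. This is precisely where bipartiteness is essential: the orientation guarantees that at each vertex all incident edges are handled uniformly (all as outgoing rays at sources, all as their reversals at sinks), which is what lets the single scalar $S(uv)$ satisfy the constraint simultaneously at both endpoints.

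For the separated case I would choose the reference direction to be parallel to the separating line. Then for every source $u\in P_1$ the rays $\vec{uv}$ all point into the open half-plane containing $P_2$, so all headings lie in an interval of length $180^\circ$; rotating the measurement of angles so that this interval becomes $(0^\circ,180^\circ)$ places every $S(uv)$ in that range, whence $\max_{uv}S(uv)\le 180^\circ$. Moreover, within a $180^\circ$-wide band of headings no pairwise difference exceeds $180^\circ$, so at both sources and sinks the smaller-angle and raw-difference quantities coincide and the verification above goes through verbatim. This produces a scan cover of time $180^\circ$ and completes the argument. I would also remark that the general bound is essentially best possible for this construction, since a star-like bipartite instance whose edges point in all directions forces the assigned values to spread over nearly $360^\circ$.
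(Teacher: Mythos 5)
Your proposal is correct and is essentially the paper's proof in static form: assigning each edge the heading of its $P_1$-to-$P_2$ ray is exactly the scan time produced by the paper's scheme, in which all vertices rotate clockwise with $P_1$ starting north and $P_2$ starting south, and your observation that reversing both rays by $180^\circ$ preserves the angle difference is precisely the paper's alternate-angles argument. The separated case matches as well, since confining all headings to a $180^\circ$ interval is the same as the paper's half-turn sweep.
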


\begin{proof}
  We show that the following strategy yields a scan cover of time $360^\circ$: All points turn in clockwise direction,
  with the points in $P_1$ starting with heading north and the points in $P_2$ with heading south; see \cref{fig:bipartiteA} for an example.
  Note that the connecting line between any point $p_1\in P_1$ and any point $p_2\in P_2$ forms
  alternate angles with the parallel vertical lines through $p_1$ and $p_2$, so both face each other
  when reaching this angle during their rotation; see \cref{fig:bipartiteB}. In the case of separated
  point sets, a rotation of $180^\circ$ suffices to sweep the other set, as illustrated in \cref{fig:bipartiteC}.
\end{proof}

\begin{figure}[htb]
  \centering
  \begin{subfigure}[b]{.3\textwidth}
    \centering
    \includegraphics[page=2]{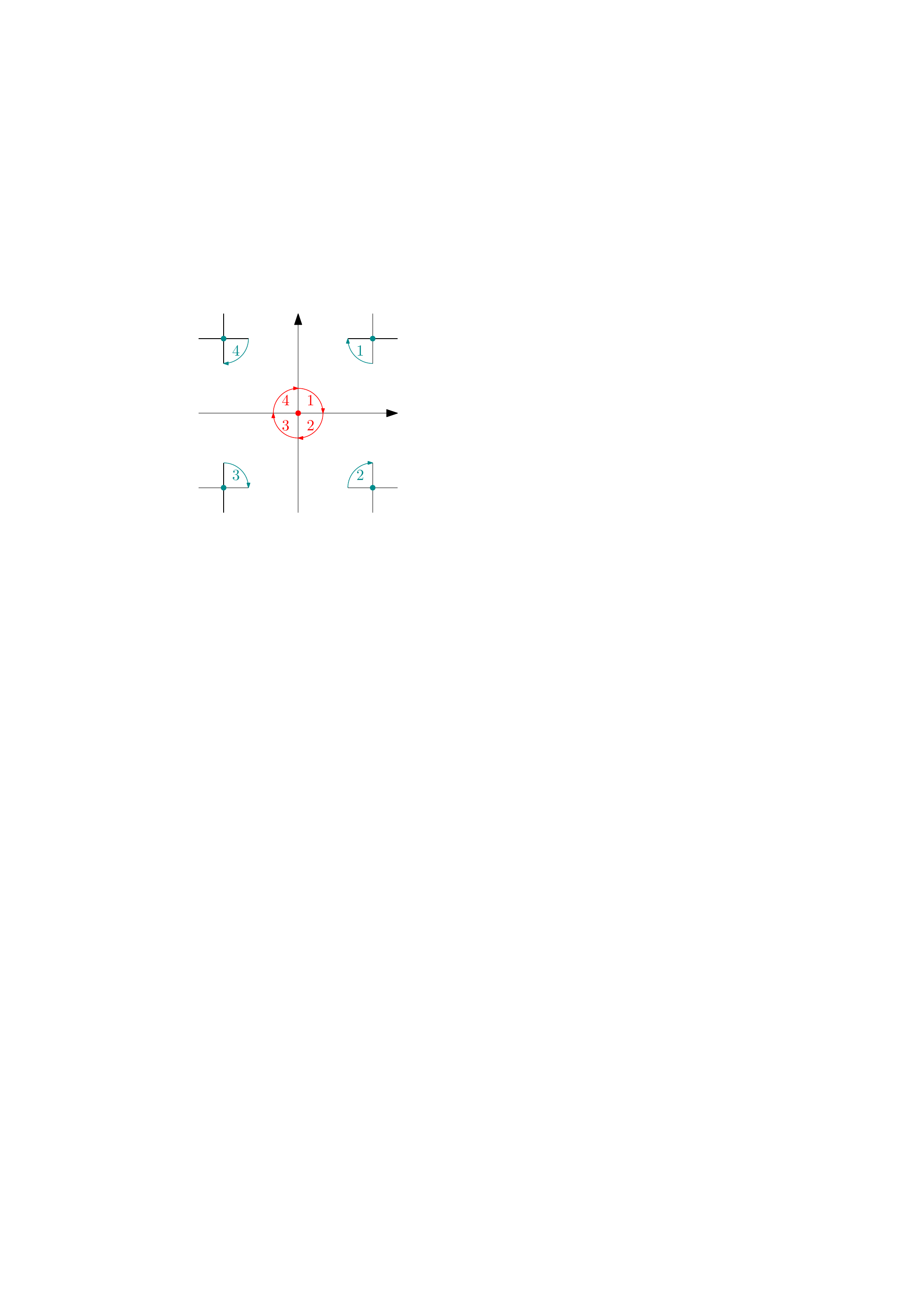}
    \caption{}
    \label{fig:bipartiteA}
  \end{subfigure}
  \hfil
  \begin{subfigure}[b]{.4\textwidth}
    \centering
    \includegraphics[page=3]{bipStrategy}
    \caption{}
    \label{fig:bipartiteB}
  \end{subfigure}
  \hfil
  \begin{subfigure}[b]{.27\textwidth}
    \centering
    \includegraphics[page=4]{bipStrategy}
    \caption{}
    \label{fig:bipartiteC}
  \end{subfigure}
  \caption{\textbf{(a)} The vertices in $P_1$ and $P_2$ rotate clockwise and start by heading north and south, respectively. \textbf{(b)} Due to alternate angles, vertices of different parts of the vertex partition face each other at the same time. \textbf{(c)} When $P_1$ and $P_2$ are separated by a line, a scan time of $180^\circ$ suffices.}
  \label{fig:bipartite}
\end{figure}

\cref{lemma:scan_points_splited_by_line} yields an \emph{absolute} bound for bipartite graphs.
Now we give a constant-factor approximation even for small optimal values.

\begin{theorem}\label{thm:approxBip}
  There is a $4.5$-approximation algorithm for \mscG for bipartite graphs in 2D.
\end{theorem}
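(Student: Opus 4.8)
The plan is to combine the absolute upper bound of \cref{lemma:scan_points_splited_by_line} with a lower bound coming from individual stars, splitting the analysis into two regimes according to the size of the optimum. First I would establish the lower bound $\mathrm{OPT}\ge R:=\max_v r_v$, where $r_v$ denotes the largest angle spanned by two edges incident to $v$ (i.e., the diameter of the set of directions of the edges at $v$). This holds because restricting any scan cover to the edges at a fixed vertex $v$ yields a scan cover of the star at $v$, and the two edges realizing $r_v$ must receive scan times at distance at least $r_v$; moreover $R$ is trivially computable. The easy regime is then $R\ge 80^\circ$: here \cref{lemma:scan_points_splited_by_line} provides a scan cover of value $360^\circ$, and since $\mathrm{OPT}\ge R\ge 80^\circ$ we obtain $360^\circ\le\frac{360^\circ}{80^\circ}\,R=4.5\,R\le 4.5\,\mathrm{OPT}$.

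The interesting regime is $R<80^\circ$, where the $360^\circ$ bound is far too weak and I must instead build a scan cover of value at most $4.5\,R$. The key structural observation is that $R<80^\circ$ forces every vertex of degree at least two to have all its incident edges inside a cone of width less than $80^\circ$; in particular no vertex can carry two edges pointing in roughly opposite directions, so the ``through'' configurations (such as the internal vertices of a monotone path, which would force alternating time steps) are excluded, and for an edge $uv$ the cones at $u$ and at $v$ must be close to antiparallel. Consequently each star can be scheduled inside a time window of length at most $r_v\le R$, while leaves (degree-one vertices) impose no angular constraint at all. I would then try to place every star's window inside a single global interval of length $4.5\,R$, either by a \emph{bounded} rotation that exploits the approximately consistent local orientation of the cones, or by partitioning the edges into a constant number of direction classes (or via a short edge-colouring) and assigning each class its own short window.

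The main obstacle is the coordination in this $R<80^\circ$ case: although each star individually fits into a window of length at most $R$, a shared edge $uv$ must simultaneously lie in the windows chosen for $u$ and for $v$, so the local schedules cannot be placed independently. Resolving this amounts to a bounded-dilation labelling of a bipartite graph whose edges carry narrow-cone constraints, and the delicate point is to keep the global makespan within the constant factor $4.5$ rather than letting small misalignments accumulate along long paths or around cycles. I expect the proof to hinge on the fact that, since any spreading of edge directions can occur only at leaves while all interior cones stay narrow, the instance is globally ``directional'' enough that a single bounded rotation (or a constant number of short rotation phases, each handling one direction class) sweeps all edges within $4.5\,R$; verifying that the total turning never exceeds this bound is where the real work lies.
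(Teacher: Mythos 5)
Your lower bound and case split are sound and match the paper's overall structure: your quantity $R$ coincides (in the regime that matters, where all direction sets are narrow) with the paper's $\Lambda$, the minimum width such that every vertex's edges fit in a $\Lambda$-cone, and the paper likewise splits at a constant threshold ($90^\circ$ rather than your $80^\circ$), handling the large-angle regime with the $360^\circ$ bound of \cref{lemma:scan_points_splited_by_line}. However, your treatment of the small-$R$ regime is a genuine gap, not a proof: you name two candidate strategies (``a bounded rotation'' or ``a constant number of direction classes, each with its own short window'') and explicitly defer the verification --- ``where the real work lies'' --- but that verification is precisely the substance of the theorem. Worse, your stated worry about misalignments accumulating along long paths or around cycles shows you are heading toward a propagation-style argument for coordinating the stars' windows, and there is no indication such an argument can be kept within $4.5\,R$; the coordination problem you correctly identify is exactly what must be solved, and your proposal does not solve it.

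The paper resolves this coordination with a global synchronization scheme in which no error accumulates at all. Choose $s$ maximal with $\Lambda':=\nicefrac{360^\circ}{2s}\geq\Lambda$; then $\Lambda'<\frac{3}{2}\Lambda$, the circle of headings splits into $2s$ sectors of width $\Lambda'$, and every vertex's edges lie in at most two adjacent sectors, one of even and one of odd index. The key geometric fact is that an edge $vw$ lies in sector $c_i$ at $v$ if and only if it lies in the \emph{opposite} sector $c_{i+s}$ at $w$. Hence in one phase all vertices of $P_1$ sweep their even sector while all vertices of $P_2$ simultaneously sweep the opposite sector at the same angular speed: by alternate angles (the parallelogram argument already used in \cref{lemma:scan_points_splited_by_line}), the two endpoints of every such edge face each other at exactly the same instant, independently of any paths or cycles through them. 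A second phase, swept in the reversed direction so that the transition turn costs at most $\Lambda'$, handles the odd sectors. The total is $3\Lambda'\leq 4.5\,\Lambda\leq 4.5\,\mathrm{OPT}$. Without this alternate-angle synchronization (or an equivalent substitute), your $R<80^\circ$ case remains unproven, and the theorem does not follow from what you have written.
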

\begin{proof}
  Consider an instance $I$ of \mscG in 2D and let $\Lambda$ denote the minimum angle such that for every vertex some $\Lambda$-cone contains all its edges. Clearly, $\Lambda$ is a lower bound on the value $OPT$ of a minimum scan cover of $I$. We use one of two strategies depending on $\Lambda$.

  If $\Lambda\geq 90^\circ$, we use the strategy of \cref{lemma:scan_points_splited_by_line} which yields a scan cover of at most~$360^\circ$ and hence a $4$-approximation.

  If $\Lambda<90^\circ$, we use an adaptive strategy as follows.
  For each vertex, we partition the set of headings $[0, 360^\circ)$ into $2s$ sectors of
  size $\Lambda'=\nicefrac{360^\circ}{2s}$, see \cref{fig:bip_apx_alpha_cone}.
  We choose $s$ maximal (and, thus, $\Lambda'$ minimal) such that $\Lambda'\geq \Lambda$.
  This implies that the edges of every vertex are contained in at most two
  adjacent sectors, see \cref{fig:bip_apx_alpha_cone}. 	Note also that
  $\Lambda'<\nicefrac{3}{2}\Lambda$, because $\Lambda>\nicefrac{360^\circ}{2(s+1)}$ and
  $s\geq 2$.

  \begin{figure}[htb]
    \begin{subfigure}[b]{.45\textwidth}
      \centering
      \includegraphics{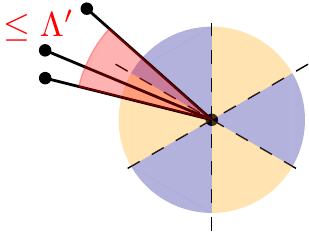}
      \caption{}
      \label{fig:bip_apx_alpha_cone}
    \end{subfigure}
    \hfil
    \begin{subfigure}[b]{.45\textwidth}
      \centering
      \includegraphics{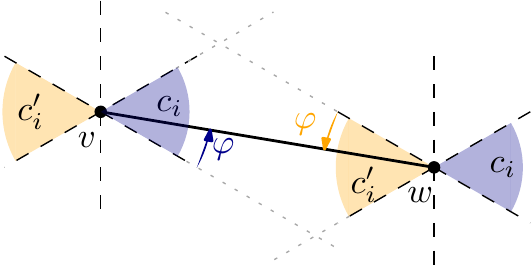}
      \caption{}
      \label{fig:bip_apx_sector_edge}
    \end{subfigure}
    \caption{%
      \textbf{(a)} Dividing the headings into $2s$ sectors with angle $\Lambda'=\nicefrac{360^\circ}{2s}$ by inserting $s$ lines. For every vertex, the incident edges lie in at most two adjacent sectors of size $\Lambda'$, because $ \Lambda \leq \Lambda'$. \\
      \textbf{(b)} An edge $e=vw$ that lies in $c_i$ for $v$, lies in $c'_i=c_{i+s}$ for $w$. If $v$ scans $c_1$ and $w$ scans $c'_1$ (both counterclockwise), they scan $e$ at the same time $\varphi$ due to the alternate angles in the parallelogram.}
  \end{figure}

  Let the sectors be $c_i=[(i\cdot \Lambda',(i+1)\cdot \Lambda')$, for $i=0,\dots,2s-1$. Moreover, $c_i':=c_{i+s\mod 2s}$ is the sector \emph{opposite} of $c_i$.
  Note that an edge $e=vw$ is in the sector~$c_i$ of $v$ if and only if $e$ is in the opposite sector $c'_i$ of $w$, see \cref{fig:bip_apx_sector_edge}.
  Let $C_\text{even}$ be the set of sectors with even indices, $C_\text{odd}$ the one with odd indices, and $C_\text{even}'$ and $C_\text{odd}'$ the set of opposite sectors, respectively.
  Because the incident edges of each vertex are contained in at most two adjacent sectors, every vertex has edges in (at most) one sector of $C_\text{even}$ and one sector of $C_\text{odd}$.

  This allows the following strategy.
  Denote the bipartition of the vertex set by $P=P_1\cup P_2$. In the first phase,
  the vertices in $P_1$ scan the sector with edges in $C_\text{even}$
  in clockwise direction, while the vertices in $P_2$ scan the sector in
  $C'_\text{even}$. In the second phase,
  vertices in $P_1$ scan the sector with edges in $C_\text{odd}$
  in counterclockwise direction, while the vertices in $P_2$ scan the sector
  in $C_\text{odd}'$.

\begin{figure}[b]
	\centering
	\includegraphics{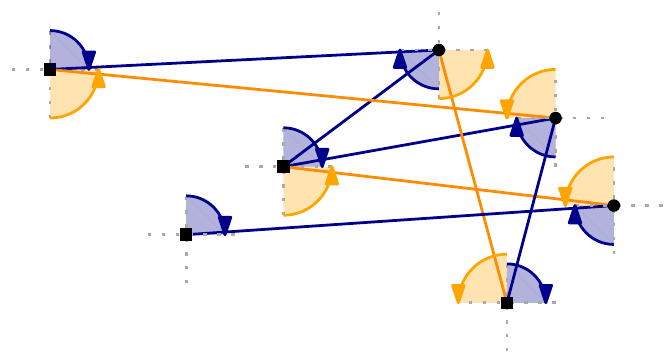}
	\caption{An example with $\Lambda'=90^\circ$. $P_1$ and $P_2$ are indicated by squares and circles, respectively.
		The blue sectors are scanned in the first scan phase; the orange sectors in the second. Each scan phase and the turning phase costs $\Lambda'$.}
	\label{fig:bip_approx_example}
\end{figure}

\cref{fig:bip_approx_example} depicts an example scan cover.
  As in \cref{lemma:scan_points_splited_by_line}, every edge is scanned in the first or second
  phase  due to the alternate angles. Clearly, each scan phase needs
  $\Lambda'$. Between the two scan phases, every vertex $v$ needs to turn to
  change its heading from the end heading of the first scan phase to the start
  heading of the second. Because both sectors of $v$ are incident and due to the
  reversed direction, the turning angle is at most $\Lambda'$; in particular,
  either the end heading of the first sector is contained in the boundary of the second sector or the two start
  headings of both phases coincide.
  The resulting scan time is $3\Lambda'\leq 3\cdot \nicefrac{3}{2}\Lambda \leq 4.5\cdot  \text{OPT}$.
\end{proof}

\subsection{Graphs with Bounded Chromatic Number}
Like in 1D, the value of a minimum scan cover in 2D has a strong relation to the chromatic number.
More specifically, we show that the optimal scan time lies in $\Theta(\log_2 \chi(G))$ and that for a given coloring of the graph $G$ with $\chi(G)^c$ colors, we can provide an $O(c)$-approximation.

\begin{lem}\label{lem:lowerBoundChromatic}
	Let $I=(P,E)$ be an instance of \mscG in $\mathbb R^d, d>1$. If $I$ has a scan cover of length $T>0$, then
	$G$ has a cut cover of size $d\cdot\lceil \frac{T}{90^\circ}\rceil $, i.e., $c(G)\leq d\cdot\lceil \frac{T}{90^\circ} \rceil$.
\end{lem}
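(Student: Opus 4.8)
The plan is to turn a scan cover of makespan $T$ into an explicit cut cover of the required size. First I would replace the discrete assignment $S\colon E\to\mathbb R^+$ by continuous headings: for each vertex $v$ let $h_v\colon[0,T]\to S^{d-1}$ be a unit-speed ($1$-Lipschitz with respect to the geodesic metric) path of directions such that at the scan time of each incident edge $vw$ the direction $h_v$ points from $v$ towards $w$. The scan-cover constraint $|S(vw)-S(vw')|\ge\alpha(vw,vw')$ guarantees that such Lipschitz headings exist. Recall that a cut cover of size $m$ is the same as a partition of $E$ into $m$ bipartite subgraphs, equivalently a labelling of $V$ by $\{0,1\}^m$ in which adjacent vertices differ in some coordinate; I would produce the cover as a union of covers built window by window.

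Next I would split the time axis $[0,T]$ into $N:=\lceil T/90^\circ\rceil$ half-open windows $I_1,\dots,I_N$, each of length at most $90^\circ$, and assign every edge to the window containing its scan time; this partitions $E$ into $E_1,\dots,E_N$. The crucial observation is that inside a single window each vertex turns by at most $90^\circ$, so the set $H_v^j$ of directions in which $v$ scans during $I_j$ has angular diameter at most $90^\circ$; hence its elements have pairwise nonnegative inner products and, by the short convex-hull argument (a nontrivial nonnegative combination $\sum\lambda_i x_i=0$ is impossible because $0=|\sum\lambda_i x_i|^2\ge\sum\lambda_i^2>0$), $H_v^j$ lies in a common open halfspace. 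Moreover, for an edge $uv\in E_j$ scanned in direction $w$ (so $w\in H_u^j$ and $-w\in H_v^j$) the two endpoints scan in \emph{antipodal} directions. It therefore suffices to show that each $E_j$ admits a cut cover of size $d$; taking the union over the $N$ windows then yields $c(G)\le d\cdot N=d\lceil T/90^\circ\rceil$.

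To cover a single window with $d$ cuts I would use the $d$ coordinate hyperplanes. For $d=2$ this is clean: assign to each edge $uv\in E_j$ its dominant coordinate $\ell=\arg\max_k|w_k|$ and cut by the sign of that coordinate of the heading. If a vertex scanned two edges with the same dominant coordinate but opposite signs there, the inequality $\langle w,w'\rangle\le(d-2)|w_\ell||w'_\ell|$ would force $\langle w,w'\rangle<0$, contradicting that both directions lie in the same $\le 90^\circ$ window; hence the sign is well defined per vertex, and antipodal endpoints receive opposite signs, so two cuts cover $E_j$. For general $d$ (in our setting $d=3$) I would argue inductively on the dimension: use one cut according to the sign of the $d$-th coordinate of each vertex's scanning directions to cover all edges whose direction is sufficiently aligned with $e_d$, and project the remaining, nearly $e_d$-orthogonal, edges to $\mathbb R^{d-1}$, where the diameter-$\le 90^\circ$ and antipodality properties persist, so that $d-1$ further cuts suffice by induction.

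The main obstacle is exactly this per-window step for $d=3$: unlike $d=2$, the dominant-coordinate sign need not be consistent across a vertex's scans within a $90^\circ$ window, so coordinate-aligned cuts must be combined carefully with the open-halfspace structure, and the threshold separating the ``axis-aligned'' edges handled by the peeled cut from the ``orthogonal'' edges passed to the recursion must be chosen so that both classes are genuinely covered while keeping the total at $d$ cuts. Handling the angular boundary cases (directions lying exactly on a coordinate hyperplane, and windows of length exactly $90^\circ$) is routine once the main separation is established, for instance by an infinitesimal perturbation of the coordinate frame.
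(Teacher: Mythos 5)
Your reduction to a per-window statement is the same first move as the paper's proof: both split $[0,T]$ into $\lceil T/90^\circ\rceil$ windows of length at most $90^\circ$ and reduce to showing that the edges scanned in one window can be covered by $d$ cuts. But your proposal does not actually prove that per-window statement for $d\geq 3$, and you say so yourself: the dominant-coordinate sign argument breaks down (in $\mathbb R^3$ two unit vectors at angle less than $90^\circ$ can have the same dominant coordinate with opposite signs, e.g.\ roughly $(0.6,0.55,0.58)$ and $(-0.6,0.55,0.58)$), and the proposed inductive ``peeling'' is only a plan whose crucial threshold/projection step is left open --- indeed projection to $\mathbb R^{d-1}$ does not preserve the pairwise-acute property of a vertex's scan directions, so it is not clear the induction can be pushed through at all. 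Since $d=3$ is exactly the geometrically relevant case and the lemma is stated for all $d>1$, this is a genuine gap, not a boundary issue that perturbation can fix.

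The idea you are missing is to classify \emph{vertices} rather than edges, and to go through a coloring instead of constructing the cuts directly. The paper assigns to each vertex the orthant (quadrant for $d=2$) containing its heading at the \emph{midpoint} of the window, i.e.\ at time $45^\circ$ into it, with respect to a generically chosen orthonormal frame so that no midpoint heading lies on a coordinate hyperplane. Within the window each vertex deviates by at most $45^\circ$ from its midpoint heading, while scanning an edge forces the two endpoints' headings to be antipodal; hence two adjacent vertices have midpoint headings at angle at least $180^\circ-45^\circ-45^\circ=90^\circ$, whereas two headings strictly inside the same open orthant have positive inner product, i.e.\ angle strictly below $90^\circ$. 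So each orthant class is independent, the window graph $G_i$ satisfies $\chi(G_i)\leq 2^d$, and the Motwani--Naor identity $c(G_i)=\lceil\log_2\chi(G_i)\rceil\leq d$ converts this coloring into the $d$ cuts per window that you were trying to build by hand. Your $d=2$ dominant-coordinate argument is salvageable (with half-open windows the strict inequality you need does hold), but for general $d$ you should replace the per-window step by this orthant-coloring argument.
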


\begin{proof}
  Partition the scan cover into $\lceil \frac{T}{90^\circ}\rceil$ intervals of length at most $90^\circ$.
  For each interval~$i$, we consider the set of edges that are scanned within this interval, inducing a graph $G_i$.
  We show that each $G_i$ is $2^d$-partite. Because $c(G_i)=\lceil \log_2(\chi(G_i))\rceil\leq d$, this implies the claim.

  We first consider the case $d=2$.
  We classify the points of $P$ into four sets, depending on their turning behavior within the interval $i$.
  Each point has a quadrant $[0, 90^\circ)$, $[90^\circ, 180^\circ)$, $[180^\circ, 270^\circ)$, or $[270^\circ, 360^\circ)$ to which it is heading at the time $45^\circ$; we assign each point this quadrant. 
  Note that every point can only leave its assigned quadrant by less than $\pm 45^\circ$.
  Two points that are assigned to the same quadrant are independent in $G_i$:
When their edge is scanned, the headings of the two points have to be opposite,
i.e., they differ by exactly $180^\circ$. Thus, the only case in which two point
headings could differ by $180^\circ$ is if one leaves its quadrant by
$45^\circ$ in clockwise and the other by $45^\circ$ in counterclockwise
direction.
 However, in this case, the points would not have been assigned to the same half-open sector.
For an illustration consider \cref{fig:2d:chromatic_lb_sector_assignment}.

   \begin{figure}[htb]
 	\centering
 	\includegraphics[page=2]{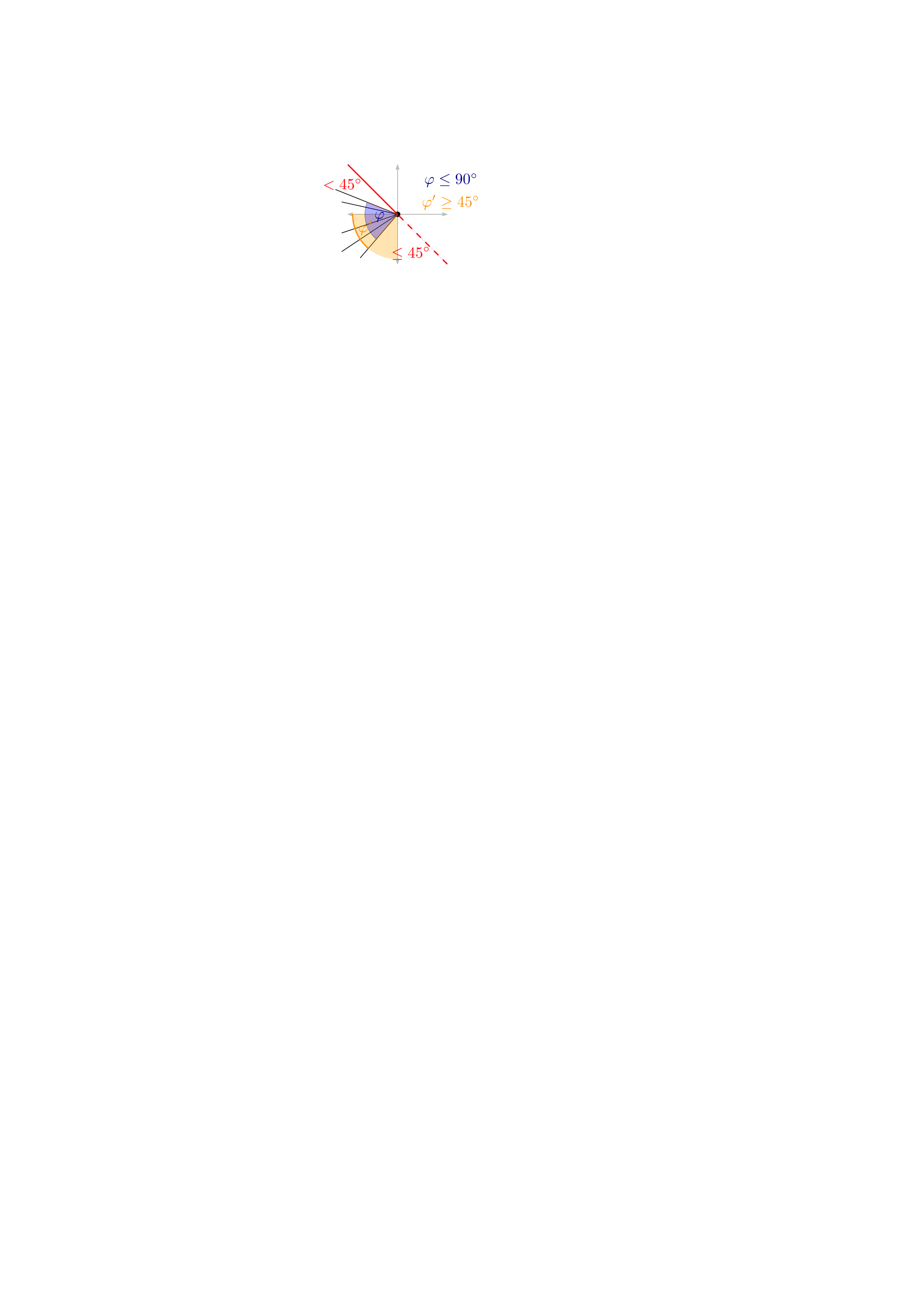}
 	\caption{Every vertex is assigned to the (orange) sector to which it is heading at time $45^\circ$. The boundary of the reachable headings within $90^\circ$ is shown in red. Since the sector is half-open, two vertices assigned to the same sector cannot reach opposing headings and thus cannot scan their edge.}
 	\label{fig:2d:chromatic_lb_sector_assignment}
 \end{figure}

For $d\geq3$, the idea is analogous.
To simplify  the argument we choose a coordinate system, i.e., an orthonormal
basis (ONB), such that, at time $45^\circ$, no point heads in a direction that
lies on a lower-dimensional subspace spanned by the basis vectors.  Let $B$
denote the set of all potential basis vectors in~$\mathbb R^d$, i.e.,
$B=\mathbb R^d$. For every point, we delete the line spanned by its heading $h$
at time~$45^\circ$ and the $(d-1)$-dimensional subspace orthogonal to $h$ from
$B$. The remaining set $B'$ is a $d$-dimensional space minus a
finite number of lower-dimensional subspaces. It follows by induction that $B'$
contains an ONB.

%

The points of $P$ are partitioned into $2^d$ different sets, depending on the orthant in which they are contained at time $45^\circ$.
Note that if two point headings of the same orthant differ by $180^\circ$, their angle difference at time $45^\circ$ is $90^\circ$, i.e., they lie on a lower dimensional subspace spanned by the basis vectors. This is a contradiction to the choice of the ONB.
\end{proof}

Because $c(G)=\lceil \log_2 \chi(G)\rceil$, \cref{lem:lowerBoundChromatic} has the following implication.
\begin{lem}\label{thm:lowerBoundKcolorable}
  Every instance $I$ of \mscG in $\mathbb R^d$ needs a scan time $T$ of at least $\Omega(\log_2 \chi(G_I))$, with~$G_I$ denoting the underlying graph of $I$.
  More precisely, $T\geq \frac{\lceil \log_2 \chi(G) \rceil -d}{d}\cdot 90^\circ$.
\end{lem}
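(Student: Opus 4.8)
The plan is to read the claimed bound directly off \cref{lem:lowerBoundChromatic}, which already does all of the geometric heavy lifting, and then combine it with the Motwani--Naor identity $c(G)=\lceil \log_2\chi(G)\rceil$ recalled just above. Concretely, \cref{lem:lowerBoundChromatic} asserts that any scan cover of length $T$ yields a cut cover of size $d\cdot\lceil T/90^\circ\rceil$, hence
\[
\lceil \log_2\chi(G)\rceil = c(G) \leq d\cdot\Bigl\lceil \tfrac{T}{90^\circ}\Bigr\rceil .
\]
From here the entire statement reduces to solving this inequality for $T$.

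First I would divide by the constant $d$ to obtain $\lceil T/90^\circ\rceil \geq \lceil\log_2\chi(G)\rceil / d$. Then I would strip off the outer ceiling using the elementary fact $\lceil x\rceil < x+1$, i.e.\ $x > \lceil x\rceil - 1$, applied to $x=T/90^\circ$, which gives
\[
\frac{T}{90^\circ} \;>\; \Bigl\lceil \tfrac{T}{90^\circ}\Bigr\rceil - 1 \;\geq\; \frac{\lceil\log_2\chi(G)\rceil}{d} - 1 \;=\; \frac{\lceil\log_2\chi(G)\rceil - d}{d}.
\]
Multiplying by $90^\circ$ yields $T > \frac{\lceil\log_2\chi(G)\rceil - d}{d}\cdot 90^\circ$, which in particular implies the (non-strict) bound claimed in the lemma.

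For the asymptotic part I would simply observe that $d$ is a constant, so the right-hand side equals $\frac{90^\circ}{d}\lceil\log_2\chi(G)\rceil - 90^\circ = \Omega(\log_2\chi(G))$, establishing $T\in\Omega(\log_2\chi(G_I))$ as claimed.

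I do not expect a genuine obstacle here: all of the real content — the orthant/quadrant partition showing that each $90^\circ$ interval scans a $2^d$-partite subgraph, so that $\lceil T/90^\circ\rceil$ such intervals give a cut cover of size $d\cdot\lceil T/90^\circ\rceil$ — is already encapsulated in \cref{lem:lowerBoundChromatic}. The only step that warrants a little care is the ceiling arithmetic in the final rearrangement, where one must avoid losing the off-by-one (the $-d$ term) or weakening the constant hidden in $\Omega$; invoking the strict inequality $\lceil x\rceil < x+1$ rather than a cruder estimate keeps this clean.
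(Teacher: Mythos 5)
Your proposal is correct and follows essentially the same route as the paper: both combine \cref{lem:lowerBoundChromatic} with the identity $c(G)=\lceil\log_2\chi(G)\rceil$ and then remove the ceiling via $\lceil x\rceil < x+1$ (the paper phrases it as $\lceil T/90^\circ\rceil \leq T/90^\circ+1$) to rearrange for $T$. The only cosmetic difference is that your ceiling estimate yields a strict inequality, which is marginally stronger than the stated non-strict bound.
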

\begin{proof}
  Because $c(G)=\lceil \log_2 \chi(G)\rceil$, \cref{lem:lowerBoundChromatic} implies that $\lceil \log_2 \chi(G)\rceil\leq d\lceil \frac{T}{90^\circ}\rceil$. In particular, it holds that 
  $
  \lceil \log_2 \chi(G) \rceil \leq d\frac{T}{90^\circ}  +d
  \iff \frac{\lceil \log_2 \chi(G) \rceil -d}{d}\cdot 90^\circ\leq T.
  $
\end{proof}

These insights have the following implications.
\begin{theorem}\label{cor:approxCol}
  For instances of \mscG in 2D with a $k$-coloring of the
  graph~$G=(V,E)$, such that $k\leq   \chi(G)^c$ for some function $c$,
there is an $O(c)$-approximation. \end{theorem}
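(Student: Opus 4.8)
The plan is to convert the given $k$-coloring into a cut cover of logarithmic size, scan the resulting bipartite graphs one after another with the algorithm of \cref{thm:approxBip}, and then charge the total makespan against the chromatic lower bound of \cref{thm:lowerBoundKcolorable}. First I would encode the color classes in binary, assigning each of the $k$ colors a distinct label in $\{0,1\}^{\lceil\log_2 k\rceil}$. For each bit position $j$, let the cut $G_j$ collect all edges whose endpoints disagree in bit $j$. Since adjacent vertices carry different colors, hence different labels, every edge disagrees in some bit and is covered; so the $m:=\lceil\log_2 k\rceil$ bipartite graphs $G_1,\dots,G_m$ cover $E$, and the hypothesis $k\leq\chi(G)^c$ gives $m\leq\lceil c\log_2\chi(G)\rceil$.

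Next I would scan $G_1,\dots,G_m$ sequentially, in each phase applying the bipartite strategy of \cref{thm:approxBip} driven by the global cone parameter $\Lambda$ of the whole instance (the smallest angle of a cone at each vertex containing all its edges). Fixing a single sectoring of $[0^\circ,360^\circ)$ into sectors of size $\Lambda'<\frac{3}{2}\Lambda$ as in the proof of \cref{thm:approxBip} is valid for every cut, because $\Lambda'\geq\Lambda$ dominates each cut's cone. When $\Lambda<90^\circ$, this guarantees that each cut is scanned in $3\Lambda'=O(\Lambda)$ and---crucially---every vertex only ever turns within the two adjacent sectors containing its edges, so the heading needed to begin the next cut lies in an $O(\Lambda)$-wide window and the transitions between consecutive cuts also cost only $O(\Lambda)$. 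When $\Lambda\geq90^\circ$, I would instead scan each cut with the $360^\circ$ full-rotation strategy of \cref{lemma:scan_points_splited_by_line}. This yields $T_{\mathrm{ALG}}=O(m)\cdot360^\circ$ in general, and the sharper bound $T_{\mathrm{ALG}}=O(m\Lambda)$ whenever $\Lambda<90^\circ$.

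Finally I would combine two lower bounds on the optimum: the cone bound $\mathrm{OPT}\geq\Lambda$ from the proof of \cref{thm:approxBip}, and the chromatic bound $\mathrm{OPT}\geq\frac{\lceil\log_2\chi(G)\rceil-2}{2}\cdot90^\circ=\Omega(\log_2\chi(G))$ degrees from \cref{thm:lowerBoundKcolorable}. If $\log_2\chi(G)$ is large, the chromatic bound alone suffices: it cancels the factor $\log_2\chi(G)$ hidden in $m\leq\lceil c\log_2\chi(G)\rceil$, so $T_{\mathrm{ALG}}/\mathrm{OPT}=O(c\log_2\chi)/\Omega(\log_2\chi)=O(c)$, irrespective of $\Lambda$. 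If $\chi(G)$ is bounded by a constant, then $m=O(c)$, and I would use the cone bound: dividing $O(m)\cdot360^\circ$ by $\mathrm{OPT}\geq\Lambda\geq90^\circ$ when $\Lambda\geq90^\circ$, or $O(m\Lambda)$ by $\mathrm{OPT}\geq\Lambda$ when $\Lambda<90^\circ$, gives $O(m)=O(c)$ in both cases.

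I expect the main obstacle to be exactly the regime of bounded chromatic number together with a tiny cone $\Lambda$, where neither lower bound is a useful constant---the chromatic bound is vacuous for $\chi(G)\leq4$ and $\Lambda$ may be arbitrarily small. The resolution relies on the confinement property of the adaptive bipartite strategy (every vertex stays inside an $O(\Lambda)$-wide window across all cuts), which makes both the per-cut scans and the inter-cut transitions scale with $\Lambda$ rather than being fixed at $360^\circ$; this keeps $T_{\mathrm{ALG}}=O(m\Lambda)$ comparable to $\mathrm{OPT}\geq\Lambda$ even as $\Lambda\to0$. Verifying this confinement and the $O(\Lambda)$ transition bound carefully is the technical heart of the proof.
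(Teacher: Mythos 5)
Your proposal is correct and follows essentially the same route as the paper: the identical binary-encoding decomposition of the $k$-coloring into $\lceil\log_2 k\rceil$ bipartite cuts, scanning each cut with the strategy of \cref{thm:approxBip}, and the same case split on the chromatic number (the cone bound $\mathrm{OPT}\geq\Lambda$ when $\chi(G)\leq 4$, the chromatic lower bound of \cref{thm:lowerBoundKcolorable} when $\chi(G)\geq 5$). Your global-sectoring/confinement analysis is a more explicit justification of the step the paper dispatches in one line, namely that the turn between consecutive bipartite scan phases costs at most $O(\mathrm{OPT})$.
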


\begin{proof}
  Partition $G$ into $\lceil \log_2 k\rceil$ bipartite graphs $G_i$. By
  \cref{thm:approxBip}, each $G_i$ can be scanned in time $\beta\cdot OPT_i$,
  with $\beta=4.5$ and $OPT_i$ denoting the optimum of the instance induced
  by~$G_i$. Clearly, $OPT_i\leq OPT$, and turning from the last
  scan of one bipartite graph to the next takes at most a time of $OPT$.
  Hence, this
  scan cover needs a time of at most $(\beta+1)OPT\lceil \log_2 k\rceil$.

  If $\chi(G)\leq 4$, then
  $O(\lceil \log_2 k\rceil)
  \leq O(\lceil c\cdot \log_2( \chi(G))\rceil)
  \leq O(\lceil c\cdot \log_2(4)\rceil)
  \leq O(\lceil 2c\rceil)\in O(c)$.

  If $\chi(G)\geq 5$, then \cref{thm:lowerBoundKcolorable} ensures that $OPT\geq\Omega(\log_2 (\chi (G)))>0$.
  Therefore, the performance guarantee is in $O\left(\frac{\log_2 k}{\log_2 (\chi
  (G))}\right)= O\left(\frac{c \log_2( \chi (G))}{\log_2 (\chi (G))}\right)=O(c)$.
\end{proof}

As a direct implication of \cref{cor:approxCol}, we get a spectrum of approximation algorithms for interesting special cases.
\begin{corollary}\label{cor:specialCases}
  \mscG in 2D allows the following approximation factors.
  \begin{enumerate}
    \item $O(\log_2 n)$ for all graphs. Furthermore, the minimum scan time lies in $\Theta(\log_2 \chi (G))$.
    \item $O(1)$ for planar graphs.
    \item $O(\log_2 d)$ for $d$-degenerate graphs.
    \item $O(1)$ for graphs of bounded treewidth.
    \item $O(1)$ for complete graphs.
  \end{enumerate}
\end{corollary}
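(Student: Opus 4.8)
The plan is to apply \cref{cor:approxCol} to each graph class by exhibiting, for each one, a polynomial-time coloring with $k \leq \chi(G)^c$ colors for an appropriate $c$. Since \cref{cor:approxCol} delivers an $O(c)$-approximation whenever we can produce such a coloring, each item reduces to a standard fact about colorability together with a bound relating the number of colors we can efficiently find to $\chi(G)$. I would handle the items in the order below, grouping them by the value of $c$ they require.

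For item~1, any graph has a trivial proper coloring with $n$ colors, so taking $k=n$ gives $k \leq \chi(G)^c$ with $c = \log_2 n / \log_2 \chi(G) = O(\log_2 n)$ in the worst case (when $\chi(G)$ is constant); more simply, $\lceil \log_2 k \rceil \leq \lceil \log_2 n \rceil$, and the proof of \cref{cor:approxCol} already shows the scan cover costs at most $(\beta+1)\,OPT\,\lceil \log_2 k\rceil$, so dividing by the lower bound $OPT = \Omega(\log_2 \chi(G))$ from \cref{thm:lowerBoundKcolorable} yields $O(\log_2 n)$. The accompanying claim that the minimum scan time lies in $\Theta(\log_2 \chi(G))$ combines the lower bound of \cref{thm:lowerBoundKcolorable} with the upper bound obtained by partitioning into $\lceil \log_2 \chi(G)\rceil$ bipartite graphs and applying \cref{lemma:scan_points_splited_by_line} (cost $360^\circ$ each), giving $O(\log_2 \chi(G))$. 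For item~5, complete graphs have $\chi(K_n)=n$, and the identity coloring uses exactly $\chi$ colors, so $c=1$ and we get $O(1)$.

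For items~2 and~4, I would invoke that planar graphs are $4$-colorable and graphs of treewidth $w$ are $(w+1)$-colorable, both via polynomial-time algorithms; in each case we can color with $O(1)$ colors, so $k=O(1)$ and $\lceil\log_2 k\rceil = O(1)$, giving an $O(1)$-approximation directly (here one does not even need the ratio form — the bounded number of bipartite pieces already bounds the total cost by $O(1)\cdot OPT$, using the $\Lambda \geq 90^\circ$ case of \cref{thm:approxBip} or the absolute $360^\circ$ bound). For item~3, a $d$-degenerate graph can be greedily colored with $d+1$ colors in polynomial time, so $k = d+1$ and $\lceil \log_2 k\rceil = O(\log_2 d)$, and the same division by $OPT=\Omega(\log_2\chi(G))$ argument yields $O(\log_2 d)$; one must take care that when $\chi(G)$ is small the ratio form degenerates, but then $\log_2 d$ may itself be large, so the bare $\lceil\log_2 k\rceil$ bound suffices.

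The main subtlety — the one point I would be most careful about — is matching each class to the correct form of the bound in \cref{cor:approxCol}. The corollary is phrased asymptotically through $c$ with $k \leq \chi(G)^c$, but for classes with constant $\chi(G)$ (planar, bounded treewidth) the exponent $c$ is not the natural parameter; there the cleaner route is the absolute argument that $\lceil\log_2 k\rceil$ colors of bipartite pieces cost $O(1)\cdot OPT$ when $k$ is constant, rather than forcing everything through the ratio $\log_2 k / \log_2 \chi(G)$, which is ill-defined when $\chi(G) \in \{1,2\}$. So the plan is to present items~2,~4,~5 via the constant-color-count route and items~1,~3 via the ratio route, in each case citing the standard polynomial coloring guarantee and then \cref{cor:approxCol} (or its internal bound) to close.
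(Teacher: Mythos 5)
Your overall plan coincides with the paper's: \cref{cor:specialCases} is stated there as a direct implication of \cref{cor:approxCol}, instantiated with exactly the colorings you name (trivial $n$-coloring, four-color theorem for planar graphs, greedy $(d+1)$-coloring for $d$-degenerate graphs, $(w+1)$-coloring for treewidth $w$, identity coloring for complete graphs), and the $\Theta(\log_2\chi(G))$ claim combines \cref{thm:lowerBoundKcolorable} with the bipartite-decomposition upper bound via \cref{lemma:scan_points_splited_by_line}, just as you describe. So in substance your proof is the intended one.

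One justification in your treatment of items 2 and 4 is wrong as written, though it is easily repaired. You claim that ``the bounded number of bipartite pieces already bounds the total cost by $O(1)\cdot OPT$, using the $\Lambda\geq 90^\circ$ case of \cref{thm:approxBip} or the absolute $360^\circ$ bound.'' Neither ingredient suffices on its own: the absolute bound of \cref{lemma:scan_points_splited_by_line} yields a cost of $O(1)\cdot 360^\circ$, which is \emph{not} $O(1)\cdot OPT$, because planar and bounded-treewidth instances can have optimum arbitrarily close to $0^\circ$ (e.g., a path whose two edges at each internal vertex form a tiny angle); and the $\Lambda\geq 90^\circ$ case of \cref{thm:approxBip} simply does not apply when $\Lambda<90^\circ$, which is exactly the regime where the absolute bound fails to be a multiplicative guarantee. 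What you need per bipartite piece is the full relative guarantee of \cref{thm:approxBip} (both cases, giving cost at most $4.5\cdot OPT_i\leq 4.5\cdot OPT$ per piece), i.e., precisely the internal bound $(\beta+1)\,OPT\,\lceil\log_2 k\rceil$ of \cref{cor:approxCol} that you yourself invoke for items 1 and 3. Routing items 2, 4 (and, for $\chi\leq 4$, item 5) through that same internal bound with $\lceil\log_2 k\rceil=O(1)$ closes the gap and makes your argument agree with the paper throughout.
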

The following bound shows a refined approximation for complete graphs.

\begin{corollary}\label{cor:complete2D}
  Consider the \mscG for complete graphs with $n$ vertices in 2D. There is a $c$-approximation algorithm with $c\to 6$ for $n\to \infty$.
\end{corollary}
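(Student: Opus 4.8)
The plan is to sandwich the optimum between a logarithmic lower bound coming from the chromatic number and a matching upper bound produced by a geometric divide-and-conquer scheme, and then let $n\to\infty$. For the lower bound I would invoke \cref{thm:lowerBoundKcolorable} with $d=2$ and $\chi(K_n)=n$, which gives $OPT\geq \frac{\lceil\log_2 n\rceil-2}{2}\cdot 90^\circ=(\lceil\log_2 n\rceil-2)\cdot 45^\circ$; asymptotically this is $45^\circ\log_2 n\,(1-o(1))$. The whole corollary then reduces to exhibiting a scan cover whose length is asymptotically $270^\circ\log_2 n$, since $270/45=6$.

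For the upper bound, I would build the scan cover recursively. Split the point set by a balanced line into two halves $S_1,S_2$ of size $\lfloor |S|/2\rfloor$ and $\lceil |S|/2\rceil$; the cross edges between $S_1$ and $S_2$ form a bipartite instance whose two classes are separated by a line, so by the second part of \cref{lemma:scan_points_splited_by_line} they can all be scanned with a single $180^\circ$ rotation. Then recurse on the complete graphs $K_{S_1}$ and $K_{S_2}$ in parallel, since they live on disjoint vertex sets and hence cannot interfere. Every edge $uv$ of $K_n$ is scanned at the unique level of the recursion at which $u$ and $v$ are first separated, and the recursion has depth $\lceil\log_2 n\rceil$, so the $\lceil\log_2 n\rceil$ levels contribute $180^\circ$ of pure scanning each.

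It remains to account for reorientation between consecutive levels. Each vertex performs one $180^\circ$ half-circle sweep per level, and before the next level it must turn from the heading where its current sweep ends to the heading where its next sweep begins. Because every sweep covers a full half-circle, and I am free to choose, for each sweep, the starting endpoint and the direction of rotation, I would argue that two consecutive half-circle sweeps (determined by two different splitting lines) can always be chained with a turn of at most $90^\circ$. This gives a total scan time of at most $\lceil\log_2 n\rceil\cdot 180^\circ+(\lceil\log_2 n\rceil-1)\cdot 90^\circ=270^\circ\lceil\log_2 n\rceil-90^\circ$, and dividing by the lower bound yields an approximation factor of at most $\frac{270^\circ\lceil\log_2 n\rceil-90^\circ}{45^\circ(\lceil\log_2 n\rceil-2)}$, which tends to $6$ as $n\to\infty$.

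The main obstacle I anticipate is making the reorientation bound precise: one must show that the two half-circle heading ranges induced by the splitting lines of two successive levels can be traversed so that the gap between the end of one sweep and the start of the next never exceeds $90^\circ$, exploiting the half-turn symmetry of a line direction and the freedom in choosing sweep direction. The remaining points—that sibling subproblems at the same level do not conflict (immediate, as they use disjoint vertices), and the bookkeeping of $180^\circ$ scanning plus $90^\circ$ transitions across $\lceil\log_2 n\rceil$ levels—are routine.
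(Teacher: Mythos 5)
Your plan mirrors the paper's proof almost exactly: the same lower bound $(\lceil\log_2 n\rceil-2)\cdot 45^\circ$ from \cref{thm:lowerBoundKcolorable}, a recursive balanced split of depth $\lceil\log_2 n\rceil$, $180^\circ$ of scanning per level via the separated case of \cref{lemma:scan_points_splited_by_line}, $90^\circ$ transitions, and the ratio $\frac{270^\circ\lceil\log_2 n\rceil-90^\circ}{45^\circ(\lceil\log_2 n\rceil-2)}\to 6$; the recursion-depth and accounting steps are fine.

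However, the step you yourself flagged as the main obstacle is a genuine gap: for splitting lines of \emph{arbitrary} directions, consecutive sweeps cannot always be chained with a turn of at most $90^\circ$. In the sweep of \cref{lemma:scan_points_splited_by_line}, a point's start and end headings are the two directions parallel to the separating line, and your two freedoms, ``starting endpoint'' and ``rotation direction,'' are not independent --- they are one and the same bit, and that bit is shared by the entire subproblem, because the alternate-angle argument forces both sides of the line to rotate synchronously in the same rotational sense from antipodal start headings. Now let $\theta\le 90^\circ$ be the angle between the level-$i$ and level-$(i{+}1)$ splitting lines. All points of one level-$(i{+}1)$ subproblem end level $i$ at a common heading parallel to the old line, and must start level $i{+}1$ at headings parallel to the new line, antipodally split between the two new sides. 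Whichever rotation sense you choose, one side turns by $\theta$ and the other by $180^\circ-\theta$, so the transition costs $\max(\theta,180^\circ-\theta)\ge 90^\circ$, with equality only when the two lines are perpendicular; for non-perpendicular lines it can approach $180^\circ$, degrading the limiting ratio from $6$ toward $8$. The repair is exactly what the paper prescribes and your argument omits: force the splitting lines to alternate between two fixed perpendicular directions (horizontal and vertical), which makes every transition exactly $90^\circ$ and completes the proof.
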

\begin{proof}
  We may  assume  without loss of generality that $n>4$. By \cref{thm:lowerBoundKcolorable}, the minimum scan time is at least $(\lceil\log_2(n)\rceil-2)\cdot 45^\circ>0$.
  For the upper bound, we partition the point set recursively into $\lceil\log_2(n)\rceil$ bipartite graphs by lines (alternating horizontal and vertical).
  Hence, \cref{lemma:scan_points_splited_by_line} allows us to scan each bipartite graph within $180^\circ$.
  The transition between two scan phases is at most~$90^\circ$.
  Therefore,  the scan time is upper bounded by $\lceil\log_2(n)\rceil 180^\circ+(\lceil\log_2(n)\rceil-1)90^\circ$.
  This yields a performance guarantee of
  \[
    \frac{270^\circ(\lceil\log_2(n)\rceil-2)+450^\circ}{45^\circ(\lceil\log_2(n)\rceil-2)}
    =6+\frac{10}{(\lceil\log_2(n)\rceil-2)}.\qedhere
  \]
\end{proof}
The factor in \cref{cor:complete2D} is $c\leq 8$ when  $n\geq 2^7$ and $c\leq 7$ when $n\geq 2^{12}$.

\section{Three-Dimensional Point Sets and Abstract MSC}\label{sec:3D}

In the following, we observe that \mscA generalizes the Path-TSP.

\begin{observation}\label{obs:starsTSP}
  Let $G=(V,E)$ be a star on $n+1$ vertices with center $v$ and $\alpha$ a metric transition cost function on $E\times E$.
  Then, an \mscA of $(G,\alpha)$ corresponds to a TSP-path of the complete graph on $V\setminus\{v\}$ with metric cost $c(u_1, u_2)=\alpha(vu_1, vu_2)$ and vice versa.
\end{observation}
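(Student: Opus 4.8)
The plan is to exhibit an explicit value-preserving correspondence between edge orderings of the star and Hamiltonian paths on the leaves. Since every edge of the star is incident to the center $v$, all edge pairs are mutually incident, so $\alpha$ is defined on every pair and the scan-cover constraints collapse to the single family $|S(vu_i)-S(vu_j)|\ge \alpha(vu_i,vu_j)$. An ordering of the leaves $u_{\pi(1)},\dots,u_{\pi(n)}$ is therefore the same as an ordering of the edges $e_i:=vu_{\pi(i)}$, and conversely any scan cover $S$ induces such an ordering by sorting the edges by increasing $S$-value (breaking ties arbitrarily).

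The heart of the argument is the claim that, for a fixed edge ordering $e_1,\dots,e_n$, the best scan cover given by the recurrence stated earlier in the excerpt satisfies $S(e_i)=\sum_{k=2}^{i}\alpha(e_{k-1},e_k)$, so that its value is exactly the path cost $\sum_{k=1}^{n-1}\alpha(e_k,e_{k+1})=\sum_{k=1}^{n-1}c(u_{\pi(k)},u_{\pi(k+1)})$. I would prove this by induction on $i$. The base case $S(e_1)=0$ is immediate. For the step, the recurrence reads $S(e_i)=\max_{j<i}\{S(e_j)+\alpha(e_i,e_j)\}$ (the incidence condition being vacuous on a star). Taking $j=i-1$ and applying the inductive hypothesis gives the lower bound $S(e_i)\ge\sum_{k=2}^{i}\alpha(e_{k-1},e_k)$. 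For the matching upper bound I would invoke the \emph{metric} property of $\alpha$: the triangle inequality applied along the subpath $e_j,\dots,e_i$ yields $\alpha(e_i,e_j)\le\sum_{k=j+1}^{i}\alpha(e_{k-1},e_k)=S(e_i)-S(e_j)$, so no index $j<i-1$ can beat the consecutive term. Since $\alpha\ge 0$, the values $S(e_i)$ are non-decreasing, hence the scan time $\max_e S(e)=S(e_n)$ equals the path cost.

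With this claim both directions follow. A TSP-path visiting the leaves in order $\pi$ gives, by scanning the edges in the induced order, a scan cover of value equal to the path cost; conversely, any scan cover $S$ sorted by $S$-value yields an ordering whose optimal scan cover is no worse than $S$ and equals the cost of the associated Hamiltonian path, so the minimum scan time coincides with the minimum path cost and optima transfer between the two problems. I expect the main obstacle to be exactly the upper bound in the inductive step: one must recognize that metricity of $\alpha$ is precisely what collapses the maximum in the recurrence onto the single consecutive transition, causing the scan time to telescope into the path length instead of depending on long-range pairs.
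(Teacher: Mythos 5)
Your proposal is correct: the reduction of the scan-cover constraints to the single family of pairwise constraints at the center, the bijection between edge orderings and leaf orderings, and the inductive/telescoping argument showing that metricity collapses the recurrence $S(e_i)=\max_{j<i}\{S(e_j)+\alpha(e_i,e_j)\}$ onto the consecutive term are all sound. The paper states this observation without proof, treating it as immediate; your argument is exactly the one implicitly intended, and it correctly identifies the triangle inequality as the reason the scan time equals the Hamiltonian path cost rather than depending on non-consecutive pairs.
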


\cref{obs:starsTSP} has two immediate consequences. Firstly, because the metric Path-TSP is NP-hard, it follows that

\begin{observation}
  \label{th:hard:amsc_star}
  \mscA is NP-hard even for stars.
\end{observation}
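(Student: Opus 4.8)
The plan is to obtain the NP-hardness as an immediate consequence of the correspondence already established in \cref{obs:starsTSP}. That observation provides a value-preserving bijection between scan covers of a star $(G,\alpha)$ with center $v$ and Hamiltonian (TSP) paths of the complete graph on the leaf set $V\setminus\{v\}$ under the induced cost $c(u_1,u_2)=\alpha(vu_1,vu_2)$. Since the metric Path-TSP is NP-hard, the idea is to turn this bijection into a polynomial-time reduction from metric Path-TSP to \mscA restricted to stars.

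First I would spell out the reduction direction. Given an instance of metric Path-TSP on $n$ points $u_1,\dots,u_n$ with a metric cost $c$, I construct a star with a fresh center vertex $v$ and leaves $u_1,\dots,u_n$, and define the transition cost by $\alpha(vu_i,vu_j):=c(u_i,u_j)$. Because all edges of a star are pairwise incident at $v$, the function $\alpha$ is defined on every relevant edge pair, and it inherits non-negativity, symmetry, and the triangle inequality directly from $c$; hence $\alpha$ is a valid metric transition cost function. This construction is clearly computable in polynomial time.

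Next I would invoke \cref{obs:starsTSP} to transfer optimality. By that correspondence, any scan cover of the constructed star of value $T$ yields a TSP-path of cost $T$ and conversely, so a minimum scan cover corresponds exactly to a minimum-cost TSP-path. Consequently, an algorithm that computes (or merely decides the value of) a minimum \mscA on stars would decide the metric Path-TSP, which establishes NP-hardness of \mscA on stars.

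The one point that warrants care — rather than a genuine obstacle — is confirming that the correspondence of \cref{obs:starsTSP} is value-preserving in both directions, so that optimality carries over cleanly. This hinges on the fact that for a star, where all edges are mutually incident, the best scan cover of an edge ordering $e_1,\dots,e_m$ has value $\sum_{i=1}^{m-1}\alpha(e_i,e_{i+1})$: the triangle inequality of $\alpha$ guarantees that every non-consecutive constraint $|S(e_i)-S(e_j)|\geq\alpha(e_i,e_j)$ is dominated by the chain of consecutive ones, so the makespan equals the corresponding path cost. With this identity in hand, the reduction is cost-preserving and the hardness conclusion follows directly.
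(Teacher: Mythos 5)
Your proof is correct and takes exactly the paper's route: the paper states this observation as an immediate consequence of \cref{obs:starsTSP} combined with the NP-hardness of metric Path-TSP, which is precisely your reduction. Your extra verification that the correspondence is value-preserving (the triangle inequality ensuring that non-consecutive constraints are dominated by the chain of consecutive ones) is a detail the paper leaves implicit, and it is the right justification.
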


Secondly, the $1.5$-approximation for metric Path-TSP by Zenklusen \cite{Zenklusen_pathTSP} can be applied.
\begin{observation}
  There exists a $1.5$-approximation algorithm for \mscA for stars.
\end{observation}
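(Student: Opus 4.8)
The plan is to reduce to metric Path-TSP via the correspondence already established in \cref{obs:starsTSP} and then invoke Zenklusen's $1.5$-approximation as a black box. Concretely, given a star instance $(G,\alpha)$ with center $v$ and leaves $u_1,\dots,u_n$, I would build the complete graph on $V\setminus\{v\}$ equipped with the cost $c(u_i,u_j)=\alpha(vu_i,vu_j)$ from \cref{obs:starsTSP}, run Zenklusen's algorithm on it, and translate the returned Hamiltonian path back into a scan cover using the same correspondence.

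Two facts make this work, and I would verify both first. First, $c$ is metric: every pair of star edges is incident at $v$, so the triangle inequality $c(u_i,u_k)\le c(u_i,u_j)+c(u_j,u_k)$ is exactly $\alpha(vu_i,vu_k)\le \alpha(vu_i,vu_j)+\alpha(vu_j,vu_k)$, which holds because $\alpha$ is metric; hence Zenklusen's algorithm is applicable. Second, the correspondence is \emph{value-preserving}, i.e.\ the minimum scan time $\mathrm{OPT}$ equals the minimum path length $\mathrm{OPT}_{\text{TSP}}$. For a fixed edge order $e_{\pi(1)},\dots,e_{\pi(n)}$, the greedy assignment $S(e_{\pi(1)})=0$ and $S(e_{\pi(k)})=S(e_{\pi(k-1)})+\alpha(e_{\pi(k-1)},e_{\pi(k)})$ yields scan time $\sum_{k=2}^{n}\alpha(e_{\pi(k-1)},e_{\pi(k)})$, which is precisely the length of the corresponding path; moreover, metricity guarantees that this assignment already satisfies $|S(e_i)-S(e_j)|\ge\alpha(e_i,e_j)$ for \emph{all} pairs and not merely consecutive ones, since $S(e_{\pi(j)})-S(e_{\pi(i)})\ge\sum_{k=i+1}^{j}\alpha(e_{\pi(k-1)},e_{\pi(k)})\ge\alpha(e_{\pi(i)},e_{\pi(j)})$. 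Minimizing over all orders on both sides gives the claimed equality of optima.

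With these two facts the approximation guarantee is immediate: Zenklusen returns a path of length at most $1.5\cdot\mathrm{OPT}_{\text{TSP}}=1.5\cdot\mathrm{OPT}$, and translating it back via \cref{obs:starsTSP} produces a scan cover of exactly that value, hence at most $1.5\cdot\mathrm{OPT}$, all in polynomial time. I do not expect a genuine obstacle here; the only subtlety worth spelling out is the value-preservation step, where one must confirm that the non-consecutive scan-cover constraints are automatically implied by metricity, so that no hidden cost inflates the scan time beyond the path length.
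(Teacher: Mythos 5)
Your proposal is correct and matches the paper's argument exactly: the paper likewise obtains this observation by combining the correspondence of \cref{obs:starsTSP} with Zenklusen's $1.5$-approximation for metric Path-TSP, stating it without further elaboration. Your additional verification that the induced cost $c$ is metric and that the correspondence is value-preserving (with metricity implying the non-consecutive constraints) is precisely the content the paper implicitly delegates to \cref{obs:starsTSP}, so it is a welcome spelling-out rather than a deviation.
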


In contrast to 1D and 2D, we show that the chromatic number does not provide
an upper bound for \mscG in 3D and \mscA.

\begin{observation}\label{obs:noboundbychrom}
  There are instances of \mscG in $3D$ with $\chi(G)=2$ that need at least $\Omega(\sqrt{n})$.
  There are instances of \mscG in $nD$ with $\chi(G)=2$ that need at least $\Omega(n)$.
\end{observation}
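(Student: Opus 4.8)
The plan is to exploit the star--Path-TSP correspondence of \cref{obs:starsTSP}: since a star $K_{1,k}$ is bipartite, every star instance already has $\chi(G)=2$, so it suffices to exhibit stars whose leaf directions force a long scan. I would first isolate the lower-bound half of \cref{obs:starsTSP}, which needs no TSP machinery. Given \emph{any} scan cover $S$ of a star with center $v$ and incident edges $e_1,\dots,e_k$, sort the edges so that $S(e_{\pi(1)})\le\dots\le S(e_{\pi(k)})$. Consecutive edges are incident at $v$, so $S(e_{\pi(i+1)})-S(e_{\pi(i)})\ge\alpha(e_{\pi(i)},e_{\pi(i+1)})$, the angle between the corresponding leaf directions. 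Summing telescopes to
\[
  \max_i S(e_i)-\min_i S(e_i)\ \ge\ \sum_{i=1}^{k-1}\alpha\bigl(e_{\pi(i)},e_{\pi(i+1)}\bigr).
\]
Hence if the $k$ leaf directions are pairwise separated by at least an angle $\delta$, then every scan cover of the star has value at least $(k-1)\delta$.

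For the three-dimensional claim I would take a star with $k:=n-1$ leaves and place the leaf directions as a $\delta$-separated set on the unit sphere $S^2\subset\mathbb R^3$ with $\delta=\Theta(k^{-1/2})$. Such a set exists by a standard volume/packing argument: in a maximal $\delta$-separated subset of $S^2$ the spherical caps of radius $\delta/2$ around the points are disjoint while the caps of radius $\delta$ cover $S^2$, so its size is $\Theta(\delta^{-2})$; choosing $\delta=\Theta(k^{-1/2})$ yields exactly $k$ mutually far directions. The inequality above then forces a scan time of at least $(k-1)\cdot\Theta(k^{-1/2})=\Omega(\sqrt{k})=\Omega(\sqrt{n})$, while the graph is a star and thus has $\chi=2$.

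For the $n$-dimensional claim the construction is immediate: embed a star with $n$ leaves in $\mathbb R^n$ with the leaf directions equal to an orthonormal basis $e_1,\dots,e_n$. Any two of these are orthogonal, i.e.\ separated by exactly $90^\circ$, so $\delta=90^\circ$ and every scan cover costs at least $(n-1)\cdot 90^\circ=\Omega(n)$; the instance has $n+1$ vertices, lives in $\mathbb R^n$, and again has $\chi=2$. The only genuinely nontrivial ingredient is the spherical packing estimate used in the 3D case; everything else is the telescoping inequality together with a direct angle computation, so the main point to get right is that a $\Theta(k^{-1/2})$-separated set of size $k$ on $S^2$ really exists, which the caps argument supplies.
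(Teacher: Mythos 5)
Your proof is correct and takes essentially the same approach as the paper: both arguments use a star (hence $\chi(G)=2$) whose leaf directions are pairwise well separated, so that the scan times of the edges---all mutually incident at the center---must be spread apart by at least the minimum pairwise angle, giving a bound of $(k-1)\delta$. The only cosmetic difference is that the paper obtains the $\Theta(n^{-1/2})$-separated directions in 3D from a subdivided geodesic grid on the sphere, whereas you use a maximal-packing argument (and both use coordinate axes for the $n$-dimensional claim).
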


\begin{proof}
  For the first claim consider a geodesic triangular grid on a sphere and embed a star graph such that its leaves are grid points and the center of the star lies in the center of the sphere.
  The $\Omega(\sqrt{n})$ can be achieved by increasing the resolution of the grid by subdivision, see \cref{fig:3d_scaling}: While the minimum turn cost between two consecutive edges approximately halves, the number of vertices roughly quadruples, doubling the overall costs that  is lower bounded by $(n-1)\cdot l$ if $l$ is the minimum edge length.

  \begin{figure}[htb]
    \centering
    \includegraphics[scale=1]{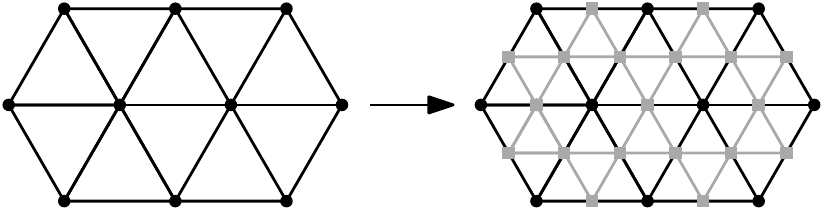}
    \caption{The geodesic grid is refined by subdividing the edges. Because a triangulation with $n$ vertices has $3n-6$ edges,  subdividing roughly quadruples the number of vertices.}
    \label{fig:3d_scaling}
  \end{figure}

  The second claim follows from considering a star on $n$ vertices for which each leaf is placed on a different coordinate
  axis. Therefore, the turn cost between any two edges is~$90^\circ$ and it takes $90^\circ n$ to scan the graph.
\end{proof}

The approximation technique for bipartite graphs in 2D relies on alternate angles and fails for $3D$ or \mscA.
Nevertheless, we provide a $2.5$-approximation for trees.

\begin{theorem}\label{thm:3DTree}
  There exists a $2.5$-approximation algorithm for \mscA for trees.
\end{theorem}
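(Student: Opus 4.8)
The plan is to sandwich the optimum between a local Path-TSP lower bound and a depth-robust layered construction. For the lower bound, I would exploit \cref{obs:starsTSP}: at every vertex $v$, the restriction of any scan cover to the edges incident to $v$ is exactly a schedule of the local star, so if $\ell(v)$ denotes the optimal \mscA value (equivalently, the minimum Hamiltonian-path cost) of that star under $\alpha$, then the spread $\max_{e\ni v} S(e)-\min_{e\ni v} S(e)$ is at least $\ell(v)$. Since all scan times are nonnegative, this gives $\mathrm{OPT}\ge L:=\max_v \ell(v)$. A useful companion fact, obtained from the triangle inequality applied along the optimal local path, is that every individual transition cost satisfies $\alpha(e,e')\le \ell(v)\le L$; this controls how far apart any two incident edges must be placed and is what will keep the construction bounded.

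For the upper bound, I would root the tree and assign scan times top-down, maintaining the invariant that when a vertex $v$ is processed the time of its parent edge is already fixed. At $v$ I then place the children edges consistently: each must be separated from the parent edge and from the other children by the corresponding $\alpha$. I would obtain such a placement from a local ordering of the edges incident to $v$ in which the parent edge is a designated endpoint, computed via the $\nicefrac{3}{2}$-approximation for (fixed-endpoint) Path-TSP of Zenklusen~\cite{Zenklusen_pathTSP}. This yields a valid local schedule, of span a constant multiple of $L$, with the parent edge at one end and all children edges on a single side of its scan time.

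The crucial difficulty — and the reason a naive top-down scheme fails — is that the scan times must \emph{not} drift with the depth of the tree: for a path graph, for instance, $L$ equals the largest single transition, yet an arbitrarily deep path still admits an optimal cover by simply \emph{bouncing} the scan times between two time levels. I would therefore alternate the orientation of the local schedules according to depth parity (equivalently, reflect each local ordering so that children are pushed \emph{inward} toward a fixed band rather than monotonically away from the root), using the bound $\alpha(e,e')\le L$ to guarantee that a realignment between two consecutive levels never costs more than one additional $L$. Combining the span of the local orderings with this bounded per-level realignment then confines all scan times to a band of width $\nicefrac{5}{2}\,L$, giving a makespan of at most $\nicefrac{5}{2}\,L\le \nicefrac{5}{2}\,\mathrm{OPT}$. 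The main obstacle I expect is exactly this band-balancing step: proving that the interaction of the fixed-endpoint local orderings with the parity-based reflections keeps every edge inside the claimed window \emph{simultaneously} at both of its endpoints, rather than only locally at a single vertex.
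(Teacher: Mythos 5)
Your lower bound ($\mathrm{OPT}\ge L=\max_v \ell(v)$ via \cref{obs:starsTSP}, plus $\alpha(e,e')\le \ell(v)$ for incident edges via the triangle inequality) is correct and matches what the paper uses implicitly. The gap is in the upper-bound construction, and it is exactly where you flag it: the parity-based ``bouncing'' does not confine the scan times to a band of width $O(L)$. Unlike a path graph, where each local schedule spans only a single transition, your local schedules span up to a constant times $L$, and these spans compound across levels: if an even-depth vertex places its children edges in $[t,\,t+cL]$, then a child near the far end of that window, being reflected, places its own children in $[t'-cL,\,t']$ with $t'$ up to $t+cL$, so grandchildren occupy $[t-cL,\,t+cL]$, great-grandchildren $[t-cL,\,t+2cL]$, and so on --- the band grows by $\Theta(L)$ per level, giving makespan $\Theta(\mathrm{depth}\cdot L)$ rather than $O(L)$. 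Reflections prevent monotone drift in one direction but do not cancel the growth. There is also a quantitative problem: forcing the parent edge to be an \emph{endpoint} of the local path can increase the local optimum by a constant factor (only a factor of $2$ is guaranteed in general), so even the per-vertex span cannot be certified to be $\nicefrac{3}{2}L$, and the constant $\nicefrac{5}{2}$ is out of reach along this route.

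The idea you are missing --- and the paper's actual mechanism --- is to close each local $\nicefrac{3}{2}$-approximate path into a \emph{cyclic} ordering. The closing transition costs at most $\ell(v)\le \mathrm{OPT}$, because by the triangle inequality it is dominated by the sub-path of the optimal local ordering between the two ends of the computed path; hence every vertex $v$ gets a cyclic schedule of period $\ell_v\le \nicefrac{3}{2}\,\mathrm{OPT}+\mathrm{OPT}=\nicefrac{5}{2}\,\mathrm{OPT}$. Now let every vertex rotate through its cycle perpetually and fix only the \emph{phases} top-down: each child shifts its cycle so that it faces the parent edge at the same moment its parent does. Periodicity is what kills the drift: even if the parent scans the shared edge late (in the middle of the child's $k$-th revolution), the child still faces each of its own edges during its \emph{first} revolution, at times in $[0,\ell_v)$, and all separation constraints hold because cyclic distances within a local cycle dominate $\alpha$. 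So every edge can be assigned the time at which its parent endpoint faces it within that endpoint's first revolution, placing all scan times in $[0,\max_v \ell_v]\subseteq[0,\nicefrac{5}{2}\,\mathrm{OPT}]$ with no band-balancing at all. Without this wrap-around, the top-down scheme cannot be repaired merely by choosing orientations.
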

\begin{proof}
  Let $I=(G,\alpha)$ be an instance of $\mscA$ for which $G$ is a tree, and let $OPT$ be the minimum scan time of $I$.
  For every vertex~$v$, we approximate an ordering of minimum cost over all its incident edges $E_v$. Let $N(v)$ denote the set of neighbors of $v$. By \cref{obs:starsTSP} such an ordering corresponds to a TSP-path. Consequently, we may use the $1.5$-approximation
  algorithm for metric Path-TSP by Zenklusen \cite{Zenklusen_pathTSP}. Moreover,
  we enhance the edge ordering to a cyclic ordering by inserting an edge from the
  last to the first edge; because the cost function is metric, the cost of the
  additional edge is upper bounded by the minimum cost ordering of the incident
  edges. Therefore, the scan time $\ell_v$ of the computed cyclic edge ordering
  of $v$ is at most $\ell_v\leq (1.5 +1)OPT$.

  We construct a scan cover as follows:
  Every vertex follows its cyclic edge ordering.
  The start headings of the vertices are chosen such that the scan time of each
  edge $e=uv$ is synchronized at the vertices $u$ and $v$. To this end, we choose
  some vertex $r$ as the root and denote the parent of each vertex $v$ by
  $par(v)$ in the tree $G$ with respect to the root $r$.
  We scan the edges of $r$ according to the cyclic edge orderings by starting with any heading, see also \cref{fig:3d_tree_apx_example}.

  \begin{figure}[htb]
    \centering
    \includegraphics[page=2]{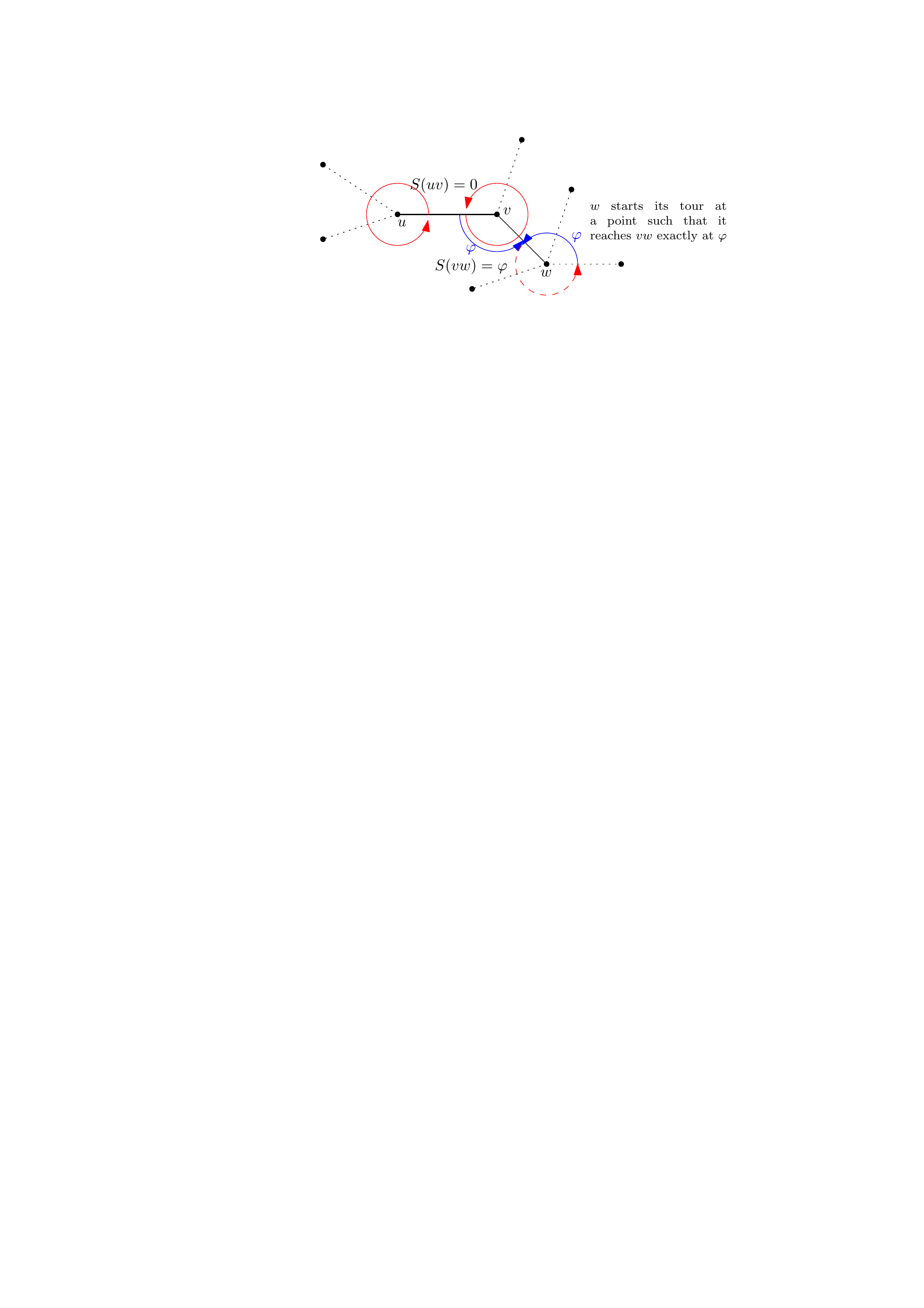}
    \caption{Root $r$ can choose its schedule. The (cyclic) schedules of the children are synchronized with the timing of its parent. Because the graph is a tree, there are no cyclic dependencies.}
    \label{fig:3d_tree_apx_example}
  \end{figure}
  We then determine the start headings by a tree traversal (e.g., DFS or BFS):
  Let $v$ be a vertex whose start heading has to be determined, and assume the
  start heading of $u:=par(v)$ is already fixed. When $uv$ is scanned
  at time $t$ for $u$, then the cyclic ordering of $E_v$ is shifted, so that $v$
  sees $uv$ also at time $t$. If this time lies between two scans, we
  simply start at the next incident edge and let the vertex wait for the
  appropriate time.
  Because all vertices start at the same time, the resulting scan cover has a scan time of at most $\max_v \ell_v\leq 2.5 \cdot OPT$.
\end{proof}

\cref{thm:3DTree} allows an approximation algorithm in terms of the arboricity
of the underlying graph. Recall that the \emph{arboricity} of a graph denotes
the minimum number of forests into which its edges can be partitioned.

\begin{theorem}\label{thm:amscArboricityApprox}
  There is a $3.5A$-approximation for the \mscA for graphs of arboricity $A$.
\end{theorem}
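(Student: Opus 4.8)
The plan is to decompose the graph into $A$ forests and run the tree algorithm of \cref{thm:3DTree} on each forest separately, scanning the forests one after another. Since the arboricity is $A$, by definition the edge set $E$ partitions into $A$ forests $F_1,\dots,F_A$. Each forest is a disjoint union of trees, and the tree approximation from \cref{thm:3DTree} applies verbatim to a forest (the synchronization argument is carried out independently on each connected component, so nothing changes when several trees are present). Thus each forest $F_j$ admits a scan cover of scan time at most $2.5\cdot OPT_j$, where $OPT_j$ is the optimum of the instance restricted to $F_j$.

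The key observation making the factor work is that $OPT_j\leq OPT$ for every $j$, since $F_j$ is a subgraph of $G$ and any scan cover of $G$ restricts to a valid scan cover of $F_j$ of no larger makespan. Concatenating the $A$ individual schedules gives a valid scan cover of all of $G$: first fully execute the schedule for $F_1$, then for $F_2$, and so on. The only subtlety is the transition between consecutive phases, where each vertex must turn from its final heading in phase $j$ to its start heading in phase $j+1$. Because the cost function $\alpha$ is metric, I would bound this transition cost by $OPT$ (turning a vertex between any two of its incident edges costs at most the diameter of its star, which is at most $OPT$); this mirrors the transition bound used in the proof of \cref{cor:approxCol}.

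Adding up, the total scan time is at most $A\cdot 2.5\cdot OPT + (A-1)\cdot OPT \leq 3.5A\cdot OPT$, which gives the claimed $3.5A$-approximation. I expect the main obstacle to be the careful accounting of the $A-1$ inter-phase transitions and justifying that each costs at most $OPT$; in particular one must argue that the transition angle at a vertex is bounded by $OPT$ rather than by some larger quantity, using metricity and the fact that both the ending and starting edges are incident to the vertex and hence scanned within the optimal makespan $OPT$ of the full instance. Everything else is a direct reduction to \cref{thm:3DTree} together with the subgraph monotonicity $OPT_j\leq OPT$.
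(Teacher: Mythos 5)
Your proposal is correct and follows essentially the same route as the paper: decompose the edge set into $A$ forests, apply the $2.5$-approximation of \cref{thm:3DTree} to each forest, concatenate the schedules, and bound each inter-phase transition by $OPT$ via metricity (your accounting of $2.5A\cdot OPT + (A-1)\cdot OPT$ is in fact marginally tighter than the paper's $(2.5+1)\cdot OPT\cdot A$). The only detail the paper adds is that the forest decomposition must be \emph{computed} in polynomial time, for which it cites Gabow's algorithm, whereas you invoke the decomposition only by definition of arboricity.
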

\begin{proof}
  We compute a decomposition into $A$ forests in polynomial time
  \cite{Gabow1992}.
  To obtain a scan cover we use
  the approximation algorithm of \cref{thm:3DTree} for each forest and
  concatenate the resulting scan covers in any order. Because the transition cost
  between any two forests is upper bounded by the minimum scan time $OPT$, the
  resulting scan cover has time of at most $(2.5+1)OPT\cdot A$. Consequently, we obtain a $3.5A$-approximation.
\end{proof}

\section{Conclusion and Open Problems}\label{sec:Conclusion}

We have presented several algorithmic results for the abstract and geometric
version of the minimum scan cover problem with a metric angular cost function,
which has strong connections to  the chromatic number.

There is a spectrum of interesting directions for future work.
How can we make use of methods for solving graph coloring problems to compute practical solutions to real-world instances?
In many scenarios, this will involve considering satellites on different trajectories, in the presence
of a large obstacle: the earth.
 This gives rise to a variety of generalizations, such as the presence of time windows for possible communication.
Other variations of both practical and theoretical interest arise
from considering objective functions such as minimizing the total energy of all satellites or minimizing the 
maximum energy per satellite.


\bibliography{biblio}

\begin{thebibliography}{10}

\bibitem{aggarwal2000angular}
Alok Aggarwal, Don Coppersmith, Sanjeev Khanna, Rajeev Motwani, and Baruch
  Schieber.
\newblock The angular-metric traveling salesman problem.
\newblock {\em {SIAM} J. Comp.}, 29(3):697--711, 1999.

\bibitem{allahverdi2015third}
Ali Allahverdi.
\newblock The third comprehensive survey on scheduling problems with setup
  times/costs.
\newblock {\em European Journal of Operational Research}, 246(2):345--378,
  2015.

\bibitem{allahverdi1999review}
Ali Allahverdi, Jatinder~N.D. Gupta, and Tariq Aldowaisan.
\newblock A review of scheduling research involving setup considerations.
\newblock {\em Omega}, 27(2):219--239, 1999.

\bibitem{allahverdi2008survey}
Ali Allahverdi, C.T. Ng, T.C.~Edwin Cheng, and Mikhail~Y. Kovalyov.
\newblock A survey of scheduling problems with setup times or costs.
\newblock {\em European journal of operational research}, 187(3):985--1032,
  2008.

\bibitem{arkin2001optimal}
Esther~M. Arkin, Michael~A. Bender, Erik~D. Demaine, S{\'{a}}ndor~P. Fekete,
  Joseph S.~B. Mitchell, and Saurabh Sethia.
\newblock Optimal covering tours with turn costs.
\newblock In {\em Proc. 12th ACM-SIAM Symp. Disc. Alg. (SODA)}, pages 138--147,
  2001.

\bibitem{arkin2005optimal}
Esther~M. Arkin, Michael~A. Bender, Erik~D. Demaine, S{\'{a}}ndor~P. Fekete,
  Joseph S.~B. Mitchell, and Saurabh Sethia.
\newblock Optimal covering tours with turn costs.
\newblock {\em {SIAM} J. Comp.}, 35(3):531--566, 2005.

\bibitem{aschner2017bounded}
Rom Aschner and Matthew~J. Katz.
\newblock Bounded-angle spanning tree: modeling networks with angular
  constraints.
\newblock {\em Algorithmica}, 77(2):349--373, 2017.

\bibitem{bdf+-zzmmd-17}
Aaron~T. Becker, Mustapha Debboun, S{\'a}ndor~P. Fekete, Dominik Krupke, and
  An~Nguyen.
\newblock {Zapping {Z}ika with a Mosquito-Managing Drone: Computing Optimal
  Flight Patterns with Minimum Turn Cost}.
\newblock In {\em Proc. 33rd Symp. Comp. Geom. (SoCG)}, pages 62:1--62:5, 2017.

\bibitem{carmi2011connectivity}
Paz Carmi, Matthew~J. Katz, Zvi Lotker, and Adi Ros{\'e}n.
\newblock Connectivity guarantees for wireless networks with directional
  antennas.
\newblock {\em Computational Geometry}, 44(9):477--485, 2011.

\bibitem{Chuzhoy:2006:HCP:1132516.1132593}
Julia Chuzhoy and Sanjeev Khanna.
\newblock Hardness of cut problems in directed graphs.
\newblock In {\em Proceedings of the Thirty-eighth Annual ACM Symposium on
  Theory of Computing}, STOC '06, pages 527--536, New York, NY, USA, 2006. ACM.

\bibitem{openproblemproject}
Erik~D. Demaine, Joseph S.~B. Mitchell, and Joseph O'Rourke.
\newblock The {O}pen {P}roblems {P}roject.
\newblock \url{http://cs.smith.edu/\~orourke/TOPP/}.

\bibitem{thesocgpaper}
S{\'{a}}ndor~P. Fekete, Linda Kleist, and Dominik Krupke.
\newblock Minimum scan cover with angular transition costs.
\newblock In {\em Proceedings of the 36th International Symposium on
  Computational Geometry (SoCG)}, 2020.
\newblock To appear.

\bibitem{fk-ctcct-19}
S{\'{a}}ndor~P. Fekete and Dominik Krupke.
\newblock Covering tours and cycle covers with turn costs: Hardness and
  approximation.
\newblock In {\em Proceedings of the 11th International Conference on
  Algorithms and Complexity (CIAC)}, pages 224--236, 2019.

\bibitem{ALENEX19}
S\'andor~P. Fekete and Dominik Krupke.
\newblock Practical methods for computing large covering tours and cycle covers
  with turn cost.
\newblock In {\em Proc. 21st SIAM Workshop Alg. Engin. Exp. (ALENEX)}, pages
  186--198, 2019.

\bibitem{fekete1997angle}
S{\'{a}}ndor~P. Fekete and Gerhard~J. Woeginger.
\newblock Angle-restricted tours in the plane.
\newblock {\em Comp. Geom.}, 8:195--218, 1997.

\bibitem{fellows2009abstract}
Mike Fellows, Panos Giannopoulos, Christian Knauer, Christophe Paul, Frances~A.
  Rosamond, Sue Whitesides, and Nathan Yu.
\newblock Milling a graph with turn costs: {A} parameterized complexity
  perspective.
\newblock In {\em Proc 36th Worksh. Graph Theo. Conc. Comp. Sci. (WG)}, pages
  123--134, 2010.

\bibitem{Gabow1992}
Harold~N. Gabow and Herbert~H. Westermann.
\newblock Forests, frames, and games: Algorithms for matroid sums and
  applications.
\newblock {\em Algorithmica}, 7(1):465, Jun 1992.

\bibitem{DBLP:journals/endm/Hoshino11}
Edna~Ayako Hoshino.
\newblock The minimum cut cover problem.
\newblock {\em Electronic Notes in Discrete Mathematics}, 37:255--260, 2011.

\bibitem{khot2001improved}
Subhash Khot.
\newblock Improved inapproximability results for maxclique, chromatic number
  and approximate graph coloring.
\newblock In {\em Proceedings 42nd IEEE Symposium on Foundations of Computer
  Science}, pages 600--609. IEEE, 2001.

\bibitem{korth2002particle}
Haje Korth, Michelle~F. Thomsen, Karl-Heinz Glassmeier, and W.~Scott Phillips.
\newblock Particle tomography of the inner magnetosphere.
\newblock {\em Journal of Geophysical Research: Space Physics}, 107(A9):SMP--5,
  2002.

\bibitem{krupke2019automated}
Dominik Krupke, Volker Schaus, Andreas Haas, Michael Perk, Jonas Dippel,
  Benjamin Grzesik, Mohamed Khalil~Ben Larbi, Enrico Stoll, Tom Haylock, Harald
  Konstanski, Kattia Flores~Pozzo, Mirue Choi, Christian Schurig, and
  S\'andor~P. Fekete.
\newblock Automated data retrieval from large-scale distributed satellite
  systems.
\newblock In {\em 2019 IEEE 15th International Conference on Automation Science
  and Engineering (CASE)}, pages 1789--1795. IEEE, 2019.

\bibitem{Loulou:1992:MCC:2307951.2308121}
Richard Loulou.
\newblock Minimal cut cover of a graph with an application to the testing of
  electronic boards.
\newblock {\em Oper. Res. Lett.}, 12(5):301--305, November 1992.

\bibitem{Motwani:1994:EAC:891890}
Rajeev Motwani and Joseph~(Seffi) Naor.
\newblock On exact and approximate cut covers of graphs.
\newblock Technical report, Stanford University, Stanford, CA, USA, 1994.

\bibitem{stirling}
Herbert Robbins.
\newblock A remark on stirling's formula.
\newblock {\em The American Mathematical Monthly}, 62(1):26--29, 1955.

\bibitem{sotskov2001mixed}
Yuri~N. Sotskov, Alexandre Dolgui, and Frank Werner.
\newblock Mixed graph coloring for unit-time job-shop scheduling.
\newblock {\em International Journal of Mathematical Algorithms},
  2(4):289--323, 2001.

\bibitem{directedCutCover}
Kaoru {Watanabe}, Masakazu {Sengoku}, Hiroshi {Tamura}, and Shoji {Shinoda}.
\newblock Cut cover problem in directed graphs.
\newblock In {\em IEEE. APCCAS 1998. 1998 IEEE Asia-Pacific Conference on
  Circuits and Systems. Microelectronics and Integrating Systems. Proceedings
  (Cat. No.98EX242)}, pages 703--706, 1998.

\bibitem{Zenklusen_pathTSP}
Rico Zenklusen.
\newblock A 1.5-approximation for path tsp.
\newblock In {\em Proceedings of the Thirtieth Annual ACM-SIAM Symposium on
  Discrete Algorithms}, SODA '19, pages 1539--1549, Philadelphia, PA, USA,
  2019. Society for Industrial and Applied Mathematics.

\end{thebibliography}

\clearpage
\newpage
\appendix

\end{document}